\newcommand{\defn}[1]{\emph{\textbf{{#1}}}}
\newcommand{\E}{\mathbb{E}}
\renewcommand{\paragraph}[1]{\vspace{.3 cm} \noindent \textbf{\boldmath #1} }
\newcommand{\Layer}[2]{#1 \triangleleft #2}
\newcommand{\arr}{\mathcal{A}}
\newcommand{\clb}{c_{\textrm{LB}}}
\newtheoremstyle{slanted}% <name>
{3pt}% <Space above>
{3pt}% <Space below>
{\slshape}% <Body font>
{}% <Indent amount>
{\bfseries}% <Theorem head font>
{.}% <Punctuation after theorem head>
{.5em}% <Space after theorem heading>
{}% <Theorem head spec (can be left empty, meaning `normal')>
\theoremstyle{slanted}
\newtheorem{theorem}{Theorem}%[section]
\newtheorem{lemma}[theorem]{Lemma}
\newtheorem{corollary}[theorem]{Corollary}
\newtheorem{definition}[theorem]{Definition}
\newtheorem{property}[theorem]{Property}
\declaretheorem[name=Definition]{definition}
\declaretheorem[name=Theorem,sibling=definition]{theorem}
\declaretheorem[name=Lemma,sibling=definition]{lemma}
\declaretheorem[name=Corollary,sibling=definition]{corollary}
\newcommand{\punt}[1]{}
\setlist{itemsep=0pt,topsep=3pt}
\begin{document}

\title{Layered List Labeling}

\author{Michael A.~Bender}
\email{bender@cs.stonybrook.edu}
\affiliation{%
  \institution{Stony Brook University and RelationalAI}
  \city{Stony Brook}
  \state{NY}
  \country{USA}
}

\author{Alex Conway}
\email{me@ajhconway.com}
\affiliation{%
  \institution{Cornell Tech}
  \city{New York}
  \state{NY}
  \country{USA}
}

\author{Mart\'in Farach-Colton}
\email{martin@farach-colton.com}
\affiliation{%
  \institution{New York University}
  \city{New York}
  \state{NY}
  \country{USA}
}

\author{Hanna Koml\'os}
\email{hkomlos@gmail.com}
\affiliation{%
  \institution{New York University}
  \city{New York}
  \state{NY}
  \country{USA}
}

\author{William Kuszmaul}
\email{william.kuszmaul@gmail.com}
\affiliation{%
  \institution{Harvard University}
  \city{Cambridge}
  \state{MA}
  \country{USA}
}

\renewcommand{\shortauthors}{Michael A. Bender et al.}

%!TEX root =  main.tex

\begin{abstract}

%!TEX root =  main.tex

The {list-labeling problem}
is one of the most basic and well-studied algorithmic primitives in data structures, with an extensive literature spanning upper bounds, lower bounds, and data management applications.  The classical algorithm for this problem, dating back to 1981, has amortized cost $O(\log^2 n)$.  Subsequent work has led to improvements in three directions: \emph{low-latency} (worst-case) bounds; \emph{high-throughput} (expected) bounds; and (adaptive) bounds for \emph{important workloads}.

Perhaps surprisingly, these three directions of research have remained almost entirely disjoint---this is because, so far, the techniques that allow for progress in one direction have forced worsening bounds in the others. Thus there would appear to be a tension between worst-case, adaptive, and expected bounds.  
List labeling has been proposed for use in databases at least as early as PODS'99, but a database needs good throughput, response time, and needs to adapt to  common workloads (e.g., bulk loads), and no current list-labeling algorithm achieve good bounds for all three.

We show that this tension is not fundamental. In fact, with the help of new data-structural techniques, one can actually \emph{combine} any three list-labeling solutions in order to cherry-pick the best worst-case, adaptive, and expected bounds from each of them. 

\end{abstract}

%%% Local Variables:
%%% mode: latex
%%% TeX-master: "main.tex" 
%%% End:

\begin{CCSXML}
<ccs2012>
   <concept>
       <concept_id>10003752.10003809</concept_id>
       <concept_desc>Theory of computation~Design and analysis of algorithms</concept_desc>
       <concept_significance>500</concept_significance>
       </concept>
   <concept>
       <concept_id>10003752.10003809.10010047</concept_id>
       <concept_desc>Theory of computation~Online algorithms</concept_desc>
       <concept_significance>500</concept_significance>
       </concept>
   <concept>
       <concept_id>10003752.10003809.10010031</concept_id>
       <concept_desc>Theory of computation~Data structures design and analysis</concept_desc>
       <concept_significance>500</concept_significance>
       </concept>
   <concept>
       <concept_id>10003752.10010061</concept_id>
       <concept_desc>Theory of computation~Randomness, geometry and discrete structures</concept_desc>
       <concept_significance>300</concept_significance>
       </concept>
   <concept>
       <concept_id>10003752.10010070.10010111</concept_id>
       <concept_desc>Theory of computation~Database theory</concept_desc>
       <concept_significance>100</concept_significance>
       </concept>
   <concept>
       <concept_id>10003752.10010070.10010111.10011710</concept_id>
       <concept_desc>Theory of computation~Data structures and algorithms for data management</concept_desc>
       <concept_significance>300</concept_significance>
       </concept>
\end{CCSXML}

\ccsdesc[500]{Theory of computation~Design and analysis of algorithms}
\ccsdesc[500]{Theory of computation~Online algorithms}
\ccsdesc[500]{Theory of computation~Data structures design and analysis}
\ccsdesc[300]{Theory of computation~Randomness, geometry and discrete structures}
\ccsdesc[100]{Theory of computation~Database theory}
\ccsdesc[300]{Theory of computation~Data structures and algorithms for data management}

%%
%% Keywords. The author(s) should pick words that accurately describe
%% the work being presented. Separate the keywords with commas.
\keywords{algorithms, data structures, history independence, randomized algorithms, online algorithms}

\maketitle

%!TEX root =  main.tex

\section{Introduction}\label{sec:intro}

The \defn{list-labeling problem} is one of the most basic problems in data structures: how can one store a set of $n$ items in \emph{sorted order} in an array of size $(1 + \Theta(1))n$, while supporting both \emph{insertions and deletions}? 
Despite the apparent simplicity of the problem, it has proven remarkably difficult to determine what the best possible solutions should look like. 
Over the past four decades, there has been a great deal of work on upper bounds~\cite{ItaiKoRo81,Willard82, Willard86, Willard92,BenderCoDe02twosimplified, BenderFiGi17,ItaiKa07,AnderssonLa90,GalperinR93,BabkaBCKS19, BenderHu07, Katriel02,BenderFiGi05, BenderDeFa05}, lower bounds~\cite{dietz1990lower,dietz1994tight,dietz2004tight,zhang1993density,Saks18,BulanekKoSa12}, and variations~\cite{BenderHu07,BenderBeJo16,DevannyFiGo17,Dietz82,Andersson89,AnderssonLa90,GalperinR93,Raman99}.

List labeling was proposed for use in database indexing as early as 1999~\cite{Raman99}. 
Today, in the database context, data structures for list labeling are typically called \defn{packed-memory arrays} (PMAs). 
PMAs are used in relational databases~\cite{TokuDB}, NoSql database~\cite{TokuMX}, graph databases~\cite{WheatmanX21, WheatmanX18, WheatmanB21, PandeyWXB21, LeoB21, LeoB19fastconcurrent}, 
cache-oblivious dictionaries~\cite{BenderDeFa00,BenderHuKu16,BenderDuIa02,BenderFiGi05,BenderFK06,BrodalFaJa02}, and order maintenance~\cite{BenderCDF02,BenderCoDe02twosimplified,Dietz82,BenderFiGi17}.

Formally, the  list-labeling problem can be formulated as follows~\cite{ItaiKoRo81}: 

\begin{definition}
A list-labeling instance of \defn{capacity} $n$ stores a dynamic set of up to $n$ elements \emph{in sorted order} in an array of $m=cn$ slots,
for  $c=1+\Theta(1)$. 
Elements are inserted and deleted over time, with each insertion specifying the new element's \defn{rank} $r \in \{1, 2, \ldots, n + 1\}$ among the other elements in the set. 
(Thus, inserting at rank $1$ means that the inserted element is the new smallest element.)

To keep the elements in sorted order in the array, the algorithm must sometimes
move elements around within the array---i.e., \defn{rebalance} elements---e.g., in order to open up a space for a new element.
The \defn{cost} of an algorithm is the number of elements moved during the insertions/deletions.\footnote{To accommodate the many ways in which list labeling is used, some works describe the problem in a more abstract (but equivalent) way: the list-labeling algorithm must dynamically assign each element $x$ a label $\ell(x) \in \{1, 2, \ldots, m\}$ such that $x \prec y \iff \ell(x) < \ell(y)$, and the goal is to minimize the number of elements that are \emph{relabeled} per insertion/deletion---hence the name of the problem.}
\end{definition}

We remark that, as a convention, if the list-labeling algorithm is randomized, then the adversary is assumed to be \defn{oblivious}, meaning that the sequence of insertions and deletions that is performed is independent of the randomness used by the data structure.

\paragraph{$O(\log^2 n)$-cost list labeling and the state of the art.}
In 1981, Itai, Konheim, and Rodeh \cite{ItaiKoRo81} initiated the study of list labeling with a beautiful solution guaranteeing amortized $O(\log^2 n)$ cost per insertion/deletion. This bound would subsequently be independently re-discovered in many different contexts~\cite{Willard81, Andersson89, GalperinR93, Raman99}.

In the four decades since Itai et al's original solution \cite{ItaiKoRo81}, there have been three major ways in which the algorithm has been improved.  These correspond to the three major performance criteria for the use of list labeling in databases: latency, special workload optimizations, and thoughput.

\begin{enumerate}[leftmargin=*]

\item \textbf{Deamortization.}
Itai et al.'s algorithm relies heavily on amortization. There has been a long line of work showing that it is possible to achieve a \emph{worst-case} bound of $O(\log^2 n)$ cost per operation~\cite{Willard82,Willard86,Willard92,BenderCoDe02b,BenderFiGi17}.

\item \textbf{Adaptive Algorithms.} 
The second major direction has been algorithms that \emph{adapt} to the properties of the workload in order to achieve better bounds on natural workloads \cite{BenderFaMo04,BenderFaMo06, BenderHu07, predictions}. This has led to improved bounds both for specific stochastic workloads  \cite{BenderFaMo04,BenderFaMo06, BenderHu07}, and for settings where the list-labeling algorithm is augmented with predictive information \cite{predictions} (i.e., learning with predictions). 

\item  \textbf{Faster Amortized Algorithms.} 
For many years, it was conjectured that, in general, the $O(\log^2 n)$ bound should be optimal~\cite{dietz1990lower, dietz1994tight, dietz2004tight}---and, indeed, lower bounds were established for several classes of algorithms~\cite{dietz1990lower, BulanekKoSa12,BulanekKoSa13}, including any deterministic one \cite{BulanekKoSa12,BulanekKoSa13}. However, it was recently shown that \emph{randomized} algorithms can actually do better, achieving an \emph{expected} cost of $O(\log^{3/2} n)$ per operation~\cite{BenderCoFa22}.
\end{enumerate}

Perhaps surprisingly, these three directions of research have remained almost entirely disjoint---this is because the techniques that allow for progress in one direction tend to lead to \emph{backward progress} in the others.
For example, the role of randomization in the recent $O(\log^{3/2} n)$ algorithm leads to \emph{almost pessimal} tail bounds (the cost is $k$ with probability $\tilde{O}(1/k)$ for any $k \le n$), making deamortization much more difficult.
The known adaptive algorithms \emph{also} rely heavily on amortization; and the approach for achieving $O(\log^{3/2} n)$ relies on a technique from the privacy literature (known as history independence \cite{NaorTe01,Micciancio97,HartlineHoMo02,HartlineHoMo05,BenderBeJo16}), in which one \emph{explicitly commits} to being non-adaptive in one's behavior\footnote{A data structure is said to be \emph{history independent} if, at any given moment, the state of the data structure depends only on the elements that it contains, and not on the history of how they got there.
A key insight in Bender et al.'s $O(\log^{3/2} n)$ algorithm is that history independence can be used to create a barrier between the data structure and the adversary that is using it.
However, in order to enforce this barrier, the data structure must necessarily be non-adaptive.}.

The apparent conflicts between techniques raise a natural question: Can one simultaneously achieve strong results on the three database optimizations of deamortization, adaptivity, and low expected cost? 
We answer this question in the affirmative. In fact, our result is black box: Given three algorithms that achieve guarantees on deamortization, adaptivity, and expected cost, respectively, one can always construct a new algorithm that achieves the best of all three worlds. 

What makes our result interesting is that, intuitively, list-labeling algorithms \emph{should not be composable}. Suppose, for example, that we attempt to interleave two algorithms $X$ and $Y$ so that some elements are logically in $X$, some are logically in $Y$, and all of the elements appear in sorted order in the same array. Whenever a rebalance occurs in $X$, it must \emph{carry around} elements from $Y$ that lie in the same interval. Even if $X$ and $Y$ each individually offer $O(\log^2 n)$ costs, the interleaved algorithm could have arbitrarily poor performance. 

The key contribution of this paper is a more sophisticated approach, in which by treating the problem of composition as a data-structural problem in its own right, we are able to obtain strong black-box results. We emphasize that, although our \emph{techniques} are data structural, the final result is still a list-labeling algorithm: all of the elements appear in sorted relative order in a \emph{single} array of size $(1 + \Theta(1))n$. 

We begin by developing a technique for composing just two list-labeling algorithms, a \defn{fast algorithm} $F$ and a \defn{reliable algorithm} $R$, with the goal of achieving the best properties of both. The new algorithm, denoted by $\Layer{F}{R}$, is referred to as the \defn{embedding} of $F$ into $R$.

\begin{restatable}{theorem}{restatablethmone}
Say that a list-labeling algorithm of capacity $n$ guarantees lightly-amortized expected cost $O(C)$ per operation on an input sequence $\overline{x}$ if, for any contiguous subsequence $\overline{x}_j, \ldots, \overline{x}_{j + T}$ of operations, the total expected cost of the operations is $O(TC + n)$. 
Suppose we are given:
\begin{itemize}[leftmargin=*]
\item A list-labeling algorithm $R$ that has lightly-amortized expected cost $E_R$ per operation and worst-case cost $W_R$ per operation.
\item A list-labeling algorithm $F$ that, on any given operation sequence $\overline{x}$, has amortized expected cost $G_F(\overline{x})$ per operation.
\end{itemize}
Then one can construct a list-labeling algorithm $\Layer{F}{R}$ that satisfies the following cost guarantees:
\begin{itemize}[leftmargin=*]
    \item \defn{Worst-Case Cost.} The worst-case cost of $\Layer{F}{R}$ for any operation is $O(W_R)$.

    \item \defn{Good-Case Cost.} On any input sequence $\overline{x}$, $\Layer{F}{R}$ has amortized expected cost $O(G_F(\overline{x}))$.

    \item \defn{General Cost.} On any input sequence, $\Layer{F}{R}$ has lightly-amortized expected cost $O(E_R)$.

\end{itemize}
Moreover, if $G_F(\overline{x})$ is the same value for all $\overline{x}$, and if $F$'s guarantee is \emph{lightly} amortized (rather than amortized), then the Good-Case Cost guarantee for $\Layer{F}{R}$ is also lightly amortized.
\label{thm:twocomp} 
\end{restatable}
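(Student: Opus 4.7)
My plan is a hierarchical two-level composition. Partition the $(1+\Theta(1))n$-slot array into contiguous chunks of size $\Theta(W_R)$, store the chunk representatives in an $R$-instance on a sub-array, and run an independent $F$-instance inside each chunk. This yields a single sorted array realized by $R$ at the coarse scale and by $F$ at the fine scale. I would first pin down this structure, specify the invariants (chunk-size bounds, the role of the representative element per chunk), and verify that it correctly supports insertions and deletions by rank.

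For the worst case, every operation consists of at most one local $F$-step (hard-capped at the chunk size $\Theta(W_R)$, with any overrun triggering a fallback) plus at most one $R$-level rebalance if the chunk overflows or underflows (which costs $O(W_R)$ by assumption), for a total of $O(W_R)$. For the good case, the key observation is that the restriction of $\overline{x}$ to any single chunk is itself a valid input sequence for $F$ on a reduced-capacity sub-instance, so $F$'s amortized expected cost per operation inside that chunk is $O(G_F(\overline{x}))$. Chunk splits and merges arise only after $\Omega(W_R)$ local operations, so the $R$-level work amortizes to $O(1)$ per operation; summing over chunks and accounting for element migrations during splits/merges yields $O(G_F(\overline{x}))$ amortized. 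When $F$'s guarantee is itself lightly amortized with constant $G_F$, the same argument pushes the lightly-amortized definition through chunk-by-chunk, with the additive $n$-terms summing correctly because each element resides in exactly one chunk at a time.

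The main technical obstacle is the general-case bound $O(E_R)$ lightly amortized, because a single $F$-step can cost as much as $\Theta(W_R)$, potentially far larger than $E_R$. My plan is to superimpose a second, lower threshold: track the total $F$-work accumulated in the current window, and whenever it threatens to exceed $\Theta(E_R)$ per operation, pre-emptively invoke $R$ on the offending chunk (or larger region) and let $R$ handle subsequent operations in that region. The resulting sequence of forced $R$-invocations forms a valid input to $R$, so $R$'s lightly-amortized guarantee bounds the total $R$-work by $O(T E_R + n)$ on any window of $T$ operations; combined with the $O(T E_R)$ upper bound on uncapped $F$-work, this delivers the claimed bound. The step I most expect to grind through is verifying that this dual-threshold scheme does not spoil the good-case bound of the previous paragraph---I would argue, via a Markov-style bound on $F$'s amortized expected cost, that on $G_F$-good inputs the $E_R$ threshold is crossed so rarely that the extra $R$-overhead is absorbed into $O(G_F(\overline{x}))$, so the two guarantees coexist inside a single algorithm rather than requiring two separate constructions.
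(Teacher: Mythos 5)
Your plan takes a genuinely different route from the paper---a two-level tiered (B-tree-like) decomposition in which $R$ governs chunk boundaries and independent $F$-instances run inside chunks, versus the paper's embedding in which a \emph{single} capacity-$n$ simulation of $F$ lives inside an $R$-shell that treats $F$'s whole sub-array as occupied---but the route has a fatal gap that the paper's inverted structure is specifically engineered to avoid: cost amplification at the coarse scale. If $R$'s ``elements'' are chunk representatives, each standing in for $\Theta(W_R)$ physical elements, then a single element move by $R$ corresponds to $\Theta(W_R)$ physical element moves. Your worst-case accounting charges an $R$-level rebalance as ``$O(W_R)$ by assumption,'' but that bound is in $R$'s own units; after amplification it is $\Theta(W_R^2)$ physical cost, breaking the $O(W_R)$ worst-case guarantee (and blowing up the General-Cost bound similarly). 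The paper calls precisely this the deadweight problem. It appears there too, but on the $F$ side, where a single $F$-move drags along buffered $R$-elements; the paper controls it by proving each buffered element is carried as deadweight $O(1)$ times total. On the $R$ side of a tiered layout the amplification factor is the full chunk size on every single $R$-move and cannot be amortized away per operation.

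There are two further problems. First, for the Good-Case bound you apply $F$'s guarantee to the restriction of $\overline{x}$ to a single chunk, but $G_F$ is a function of the input sequence and nothing forces $G_F(\overline{x}\restriction\text{chunk}) \le G_F(\overline{x})$; moreover each per-chunk $F$-instance has capacity $\Theta(W_R)$, not $n$, so the hypothesis on $F$ does not apply to it. The paper avoids both issues by keeping a single full-capacity $F$ simulation that receives exactly the original $\overline{x}$, so $G_F(\overline{x})$ applies verbatim. Second, the dual-threshold scheme for General Cost is not operationally defined---``let $R$ handle subsequent operations in that region'' is not a primitive a list-labeling algorithm supports---and the routing policy introduces a potential feedback loop: chunk boundaries depend on $R$'s random placement, which shapes the per-chunk $F$-work, which drives when you invoke $R$. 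That is exactly the input-interference problem Lemma~\ref{lem:nocycles} rules out for the paper's construction. The paper breaks such loops by having slow-path operations always perform a fixed $\Theta(E_R)$ budget of rebuild work, never an amount tied to $R$'s observed cost.
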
 

One should think of the quantities $E_R$ and $W_R$ in Theorem \ref{thm:twocomp} as being functions of $n$ that are known in advance, and one should think of $G_F$ as being a positive-valued function that depends not just on $n$ but also on the input sequence $\overline{x}$.

What makes the specific structure of Theorem \ref{thm:twocomp} powerful (including its somewhat subtle distinction between amortization vs.~\emph{light} amortization) is its repeated \emph{composability}. Given three data structures $X$, $Y$, $Z$, where $X$ has an adaptive guarantee depending on the input, $Y$ has an expected cost guarantee on any input, and $Z$ has a worst-case guarantee on any input, one can apply Theorem \ref{thm:twocomp} twice to conclude that $\Layer{X}{(\Layer{Y}{Z})}$ has \emph{all three properties}.

\begin{restatable}{theorem}{restatablethmtwo}
Consider three list-labeling algorithms $X$, $Y$, $Z$, where $X$ has at most $A(\overline{x})$ amortized expected cost on any input $\overline{x}$, where $Y$ has at most $B$ expected cost on any input, and where $Z$ has worst-case cost at most $C$ on any input. Then, on any input sequence of length $\Omega(n)$, the embedding $$\Layer{X}{(\Layer{Y}{Z})}$$ simultaneously achieves amortized expected cost $O(A(\overline{x}))$ on any input $\overline{x}$; amortized expected cost $O(B)$ on any input; and worst-case cost $O(C)$ on any input.
\label{thm:threecomp}
\end{restatable}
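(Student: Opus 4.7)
The plan is to apply Theorem~\ref{thm:twocomp} twice, first composing $Y$ into $Z$, and then composing $X$ into the resulting algorithm. The only nontrivial bookkeeping is checking that the bounds promised by the hypotheses of Theorem~\ref{thm:threecomp} translate into the specific \emph{lightly-amortized} form that Theorem~\ref{thm:twocomp} demands of its reliable input, and that the ``moreover'' clause of Theorem~\ref{thm:twocomp} applies at the right step.

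\textbf{First application.} Take $R := Z$ and $F := Y$. Since $Z$ has worst-case cost $C$, any window of $T$ consecutive operations costs at most $TC \le O(TC + n)$, so $Z$ trivially has lightly-amortized expected cost $O(C)$ in addition to its worst-case bound; thus $E_R = W_R = C$. The fast algorithm $Y$ has expected cost $B$ on any input, which is the constant amortized expected cost $G_F = B$, and since the total expected cost over any window of length $T$ is at most $TB \le O(TB + n)$ this guarantee is in fact lightly amortized. Theorem~\ref{thm:twocomp} now produces $\Layer{Y}{Z}$ with worst-case cost $O(C)$ and, by the ``moreover'' clause (because $G_F$ is constant and $Y$'s guarantee is lightly amortized), with lightly-amortized expected cost $O(B)$.

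\textbf{Second application.} Take $R := \Layer{Y}{Z}$ and $F := X$. The previous step shows that $R$ has lightly-amortized expected cost $E_R = O(B)$ and worst-case cost $W_R = O(C)$, exactly the shape required. The fast algorithm $X$ has amortized expected cost $G_F(\overline{x}) = A(\overline{x})$ on input $\overline{x}$. Theorem~\ref{thm:twocomp} then yields $\Layer{X}{(\Layer{Y}{Z})}$ with worst-case cost $O(C)$ on any operation, amortized expected cost $O(A(\overline{x}))$ on any input $\overline{x}$, and lightly-amortized expected cost $O(B)$ on any input. To get the non-qualified amortized bound $O(B)$ claimed in Theorem~\ref{thm:threecomp}, observe that for an input of length $T = \Omega(n)$, the lightly-amortized bound gives total expected cost $O(TB + n) = O(TB)$, hence amortized expected cost $O(B)$.

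\textbf{Where the subtlety lives.} The proof is essentially a two-line derivation once Theorem~\ref{thm:twocomp} is in hand, so the only real obstacle is conceptual rather than computational: one must recognize that the intermediate algorithm $\Layer{Y}{Z}$ must be fed into the second application as a \emph{reliable} algorithm, not a fast one, and this requires its good-case $O(B)$ guarantee to be lightly amortized rather than merely amortized. This is precisely why the ``moreover'' clause of Theorem~\ref{thm:twocomp} is needed in the first application, and why it is important that $Y$'s input guarantee ($B$ expected cost per operation, uniformly over inputs) already comes in the lightly-amortized, input-independent form that clause requires.
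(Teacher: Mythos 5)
Your proof is correct and follows essentially the same route as the paper: apply Theorem~\ref{thm:twocomp} first with $R := Z$, $F := Y$ (noting $E_R = W_R = C$, $G_F = B$), invoke the ``moreover'' clause to upgrade the good-case bound to a lightly-amortized one, then apply it again with $R := \Layer{Y}{Z}$, $F := X$. You are somewhat more explicit than the paper in verifying that $Z$'s worst-case guarantee implies light amortization and that $Y$'s non-amortized expected-cost guarantee satisfies the ``moreover'' hypothesis, but the decomposition and the key observation (that light amortization of the intermediate algorithm is what licenses the second composition) are the same.
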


\paragraph{Technical overview.}
To give intuition for Theorem \ref{thm:twocomp}, let us focus on combining the $G_F(\overline{x})$ input-specific cost of $F$ with the amortized expected $O(E_R)$ cost of $R$ (on any input). This allows us to highlight some of the structural challenges that arise without tackling the problem in its entirety.

As noted earlier, a naive approach to combining $F$ and $R$ would be to have two list-labeling algorithms that are interleaved with one another. Some array slots belong to $F$ and others belong to $R$. When new elements are inserted, they are sent to whichever of $F$ and $R$ can support the insertion more cheaply. At a high level, there are three reasons why this approach ends up failing badly:
\begin{itemize}[leftmargin=*]
\item \textbf{The Deadweight Problem: }In the full array, items must appear in truly sorted order. This forces $F$ and $R$ to be interleaved in potentially strange ways. For example, two elements $f_i$ and $f_{i + 1}$ that appear consecutive to $F$ might have a long sequence $r_j, \ldots, r_k$ of elements from $R$ between them. If $F$ tries to move $f_i$ and $f_{i + 1}$ (at what it thinks is a cost of 2), then it must also carry around $r_j, \ldots, r_k$ as \emph{deadweight} during the move (at an actual cost of $k - j + 3$). 

\item \textbf{The Input-Interference Problem: }Because we selectively choose which of $F$ and $R$ receive each insertion, the specific structure of the input $\overline{x}$ will be corrupted so that the cost $G_F(\overline{x})$ becomes $G_F(\overline{x}')$ for some $\overline{x}'$. Even worse, if either $F$ or $R$ rely on randomization (or adapt to randomization in the input sequence), then that randomization can end up affecting how the input gets partitioned among $F$ and $R$, meaning that the randomization adaptively changes the input! This invalidates any randomized guarantees offered by $F$ or $R$.
\item \textbf{The Imbalance Problem: }If the total number of slots in the array is $1 + \varepsilon n$, and $F$ and $R$ are each allocated $(1 + \varepsilon)n / 2$ slots, then neither data structure can actually fit more than $(1 + \varepsilon)n / 2$ elements. This means that we need to either (1) somehow dynamically change the number of slots allocated to each of $F$ and $R$; or (2) introduce an algorithmic mechanism for keeping $F$ and $R$ load balanced.
\end{itemize}

The first step in resolving these problems is to \emph{embed $F$ into $R$} in a hierarchical fashion, rather than treating the two data structures symmetrically. Now the slots for $F$ (including both empty and occupied slots) are all \emph{elements} of $R$'s array. That is, $R$ views all slots of $F$ (either occupied or free slots in $F$) as occupied slots.
Additionally, $R$ contains some elements (called buffer slots) that $F$ does not know about. If the total number of array slots is $(1 + 3\varepsilon)n$, then one should think of $F$ as occupying $(1 + \varepsilon)n$ slots, and $R$ as getting to make use of the $2\varepsilon n$ additional slots that $F$ does not know about (half of these slots will be used as buffer slots and half will be used as free slots for $R$).

Now, the basic idea is as follows. If an insertion can be implemented efficiently in $F$, then it is placed directly in $F$. Otherwise, if an insertion incurs too much cost in $F$ (more than $\Omega(W_R)$ cost), then the insertion is \defn{buffered} in $R$ \emph{until $F$ can eventually complete the insertion}. The buffering of operations means that $F$ can catch up on rebuild work slowly over time. Finally, whenever $R$ incurs some cost, we also put $\Theta(E_R)$ rebuild work into catching $F$ up. This allows for us to maintain as an invariant that, in expectation, we have put at least as much work into $F$ (over time) as we have into $R$.\footnote{It is tempting to put $\Theta(C)$ rebuild work into $F$, where $C$ is the amount of cost incurred on $R$ during that operation. This type of `direct work matching' would simplify the analysis of the embedding, but it would also subtly reintroduce the input-interference problem, as the quantity $C$ (influenced by $R$'s random bits) would influence the rate at which $F$ catches up, which would influence which operations in the future are buffered/not-buffered, which is what decides $R$'s input.}

Critically, $F$ will eventually perform every insertion/deletion, meaning that, from its perspective, the original input $\overline{x}$ is preserved. Furthermore, although $F$'s behavior affects which insertions get buffered in $R$, the embedding is carefully designed so that the relationship is one-directional. This prevents feedback cycles in which the randomness for $R$ (or $F$) indirectly impacts the future input for $R$ (or $F$), and allows us to avoid the input-interference problem. 

The hierarchical embedding does not help with the deadweight problem, however. When $F$ rearranges what it thinks is a subarray of some size, it may in actuality be carrying around many buffered elements as deadweight. Moreover, the hierarchical embedding would seem to only \emph{worsen} the imbalance problem. 
For example, if the input sequence $\overline{x}$ is such that $G_F(\overline{x}) = \omega(E_R)$, then $F$ may be arbitrarily expensive. 
This means that $F$ will perpetually fall behind $R$, causing the number of buffered elements to balloon uncontrollably.

The second algorithmic insight is that, as elements accumulate in $R$'s buffer, opportunities arise for us to consolidate work performed by $F$. For example, if $F$ plans to rebuild a subarray $A[i, j]$ for the insertion of some element $x$, and then, later on, to rebuild the same subarray again for the insertion of some element $y$, then the two rebuilds can potentially be merged together. By carefully managing the consolidation of work (and how it interacts with the progression of the data structures), we can eliminate both of our remaining problems at once. We resolve the deadweight problem by ensuring that each element $x$ that is inserted is carried around as deadweight at most $O(1)$ total times before it successfully migrates to $F$. And we resolve the Imbalance problem by ensuring that, as $F$'s buffered work accumulates, the work consolidates at a fast enough rate that $F$ is guaranteed to make progress before the number of buffered elements becomes problematically large. 

Although our hierarchical embedding solves the three major problems faced by the naive solution, it requires a great deal of technical care to avoid introducing other new problems. If we are not careful, for example, then whenever we perform work in $F$, the rearrangement of $F$-elements and $R$-elements (moved around as deadweight) will invalidate the state of $R$. Another issue lies in the precise design of Theorem \ref{thm:twocomp}---because we need to be able to apply the theorem twice, we must design its guarantees so that the output of the first application can be fed as a legal input into the second. This leads to several subtleties such as, for example, the role of \emph{light} amortization in the theorem statement. Nonetheless, by designing the embedding in just the right way, we show that it is possible to overcome all of these issues simultaneously, resulting in Theorem \ref{thm:twocomp} and then, by composition, Theorem \ref{thm:threecomp}.

% %%% Local Variables:
% %%% mode: latex
% %%% TeX-master: "main.tex" 
% %%% End:

\section{Preliminaries}\label{sec:prelims}

In this section, we provide a formal definitions for the list-labeling problem, as well as some mathematical notation we use in the remainder of the paper.

A list-labeling data structure of \defn{capacity} $n$ stores a dynamic set of at most $n$ \defn{elements} in an array of $m=cn$ \defn{slots} for $c=1+\Theta(1)$, maintaining the invariant that the elements appear in sorted order. 
The list-labeling data structure sees the elements that it stores as black boxes---the only information that it knows about the elements is their relative ranks.

The data structure supports \defn{operations} of the form $\overline{x}_t = (r,\sigma)$, where $t$ denotes the timestep of the operation, $\sigma$ specifies whether the operation is an insertion or a deletion, and $r$ is the \defn{rank} of the inserted/deleted element.
An \defn{insert} operation at rank $r$ adds an element with rank $r$ in the data structure, and increments the ranks of all elements whose ranks were at least $r$.
An \defn{delete} operation at rank $r$ deletes the element with rank $r$ from the data structure, and decrements the ranks of all elements whose ranks were at least $r+1$.
We use $\overline{x} = (\overline{x}_1,\dots,\overline{x}_T)$ to denote the input sequence of operations for all timesteps $1,\dots,T$ in the lifespan of the data structure.

In randomized list-labeling data structures, the operations are assumed to be performed by an \defn{oblivious adversary}, who may know the \emph{distribution} of the random decisions made by the data structure, but does not get to see the random decisions made in any specific instance.

The \defn{cost} of a list-labeling algorithm on a sequence of operations is the number of element moves performed by the algorithm on that input sequence. The list labeling algorithm is said to incur \defn{amortized expected cost} at most $O(C)$ if, on every prefix of the input sequence the expected average cost per operation is at most $O(C)$. In this paper, we also define \defn{lightly amortized expected cost}: a list-labeling algorithm has lightly amortized expected cost $O(C)$ on input sequence $\overline{x}$ if on any contiguous subsequence $\overline{x}_j,\dots,\overline{x}_{j+t}$ of $\overline{x}$, the total expected cost on the subsequence is $O(tC+n)$.

\section{Embedded List-Labeling Algorithms}

In this section, we describe the construction of $\Layer{F}{R}$ (read ``$F$ in $R$''), the embedding of a fast algorithm $F$ into a reliable algorithm $R$. Fix a positive constant $\varepsilon > 0$, and let $F$ and $R$ be two list-labeling algorithms, where $F$ is on an array of size $(1 + \varepsilon)n$ capable of holding up to $n$ elements, and $R$ is on an array of size $(1 + 3\varepsilon)n$ capable of holding up to $(1 + 2 \varepsilon)n$ elements. The full embedding $\Layer{F}{R}$ will be an array $\arr$  of size $(1 + 3\varepsilon)n$ capable of holding up to $n$ elements.\footnote{In order to achieve a final slack of $\varepsilon$ in the overall embedding, we would need to use $\varepsilon/3$ here. For clarity of notation, we use $\varepsilon$, and achieve a slack of $3\varepsilon$ in the resulting algorithm.}

 The embedding $\Layer{F}{R}$ consists of two primary components, a list-labeling algorithm called the \defn{F-emulator}, which is implemented using (a modified version of) $F$, and a list-labeling algorithm called the \defn{R-shell}, which is implemented using $R$.  The $F$-emulator runs on a subarray denoted by $\arr_F$ of size $n (1 + \varepsilon)$ (although, as we shall see, the choice of which slots comprise this subarray will change over time), and the $R$-shell runs on the entire array, including the remaining $2 \varepsilon n$ slots of $\arr \setminus \arr_F$. An example of what $\arr_F$ looks like is given in Figure \ref{fig:slot_types} (we will say more about this figure later).

\paragraph{High-level roles of the $F$-emulator and $R$-shell.} The $F$-emulator maintains a simulated copy of $F$, i.e., it keeps track internally of what the state of $F$ would be on an array of size $(1 + \varepsilon)n$ elements. As we shall see, this simulated copy of $F$ does not physically exist (i.e., it is not necessarily what the array $\arr_F$ stores at any given moment). The simulation is just to help the $F$-emulator with planning.

The $F$-emulator works in batches to transform the state of $\arr_F$ into that of the simulated copy---each batch aims to get the $F$-emulator to the same state that the $F$-simulator was in \emph{when the batch began}. We refer to the ongoing process of transforming the state of $\arr_F$ into the state of the simulated copy of $F$ as a \defn{rebuild}, and at a given time step we say there is a \defn{pending rebuild} if the $F$-emulator is not fully caught up with the simulated $F$. In order that the $F$-emulator is not too expensive on any given time step, rebuilds perform at most $O(E_R)$ work per time step.

If, when insertion occurs, there is already a pending rebuild, or if inserting that item would \emph{cause} a pending rebuild that requires more than $\Omega(E_R)$ work, then the inserted item must be \defn{buffered} in the $R$-shell. This means that the item is stored in one of the slots $\mathcal{A} \setminus \mathcal{A}_F$ that the $R$-shell knows about but that the $F$-emulator does not. As we shall see, the item is only moved into the $F$-emulator once the $F$-emulator eventually catches up to a state where it knows about that item. That is, the item will be moved from a buffer slot into $\mathcal{A}_F$ once the $F$-emulator performs a rebuild that transforms $\mathcal{A}_F$ into a state that contains the item. 

Whereas the $F$-emulator must maintain a simulated copy of $F$, the $R$-shell will not need to do any such thing. Rather, the $R$-shell will directly use $R$ to implement the sequence of operations that it receives (although, as we shall see, this sequence is not the same as the original input sequence given to $\Layer{F}{R}$).

\begin{figure}
\centering
\includegraphics[width=0.9\linewidth]{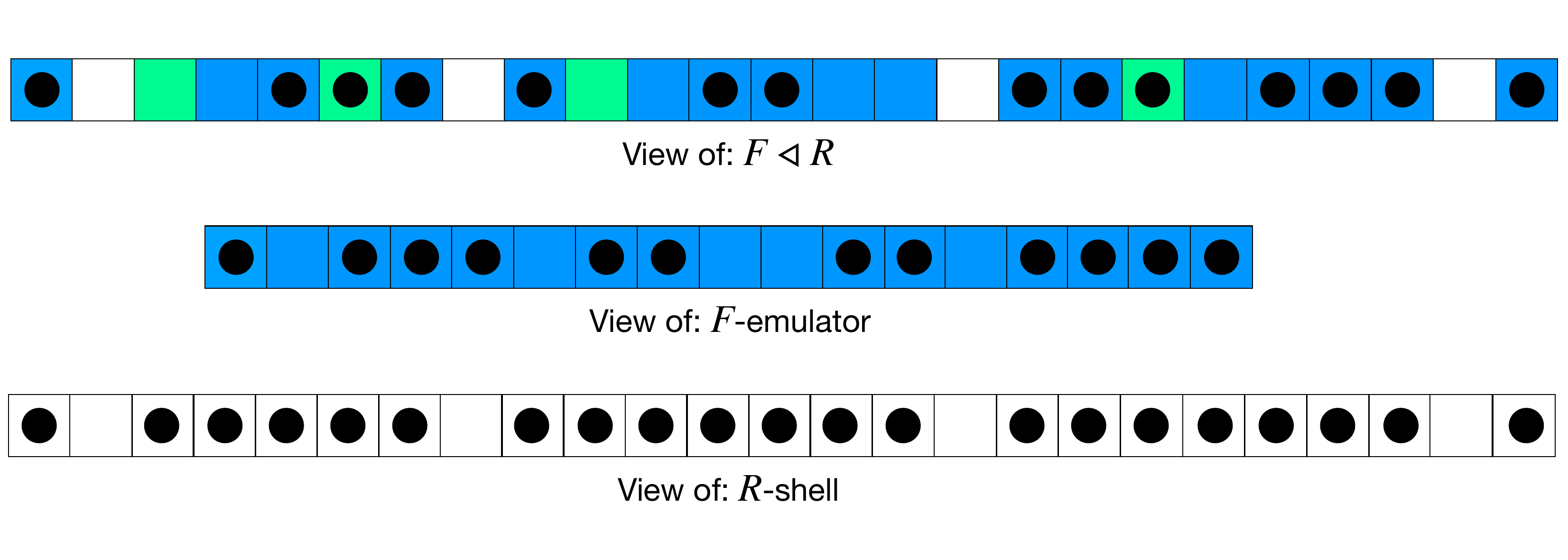}
\caption{An example array $\arr$. The first image shows the data structure from the view of the embedding $\Layer{F}{R}$. There are 17 $F$-emulator slots, shaded blue, of which 12 are occupied by real elements. There are 4 ($R$-shell) buffer slots, shaded green, of which 2 are occupied by real elements. Finally, there are 4 $R$-shell empty slots, shaded white. The second image shows the data structure from the view of the $F$-emulator (i.e., the array $\mathcal{A}_F$), which only sees the blue slots. The third image shows the view of the $R$-shell, which is aware of all slots in the array, but sees all $F$-emulator slots (occupied and free) and buffer slots (occupied and free) as occupied by elements.}
\label{fig:slot_types}
\end{figure}

\paragraph{Types of slots.}
Figure \ref{fig:slot_types} gives an example of different types of slots that can exist in $\Layer{F}{R}$ at any given moment. These include: the \defn{$F$-emulator slots} (i.e., $\mathcal{A}_F$); $\epsilon n$ \defn{($R$-shell) buffer slots}; and $\epsilon n$ \defn{$R$-shell empty slots}.  

From the perspective of the $F$-emulator, the $F$-emulator slots are the only slots that exist, consisting of both the items and free slots for the $F$-emulator. From the perspective of the $R$-shell, the $F$-emulator slots are all occupied slots. Additionally, from the perspective of the $R$-shell, the buffer slots are \emph{also} all occupied slots. If an $F$-emulator slot (resp.~a buffer slot) does not actually contain an item, then the $R$-shell treats it as containing an \defn{$F$-emulator dummy element} (resp.~a \defn{buffer dummy element}). The only free slots, from the perspective of the $R$-shell, are the $\epsilon n$ $R$-shell empty slots. The other slots are viewed by the $R$-shell as elements that it can move around.

\paragraph{How moves in the $F$-emulator are implemented in $\Layer{F}{R}$.} An important issue that we will need to be careful about is \emph{cost amplification} in the $F$-emulator: when we rearrange items in the $F$-emulator, we will need to \emph{also} rearrange elements in $R$-shell whose ranks lie between those that we rearranged in the $F$-emulator. 

Suppose that the $F$-emulator wishes to move an element $x$ into an ($F$-emulator) free slot $s$ immediately to $x$'s right (in the $F$-emulator). Suppose, furthermore, that there are $a$ buffer slots between $x$ and $s$, $a_1$ of which contain actual elements and $a_2 > 0$ of which contain dummy elements. Let $i_1 < i_2 < \cdots < i_{a + 2}$ denote the positions in which $x$, the $a$ buffer slots, and $s$ appear, respectively. 
We cannot move $x$ directly into slot $s$, because it would jump over $a_2$ actual elements. Instead, the embedding $\Layer{F}{R}$ must move the $a$ buffer slots from positions $i_2, \ldots, i_{a + 1}$ to positions $i_3, \ldots, i_{a + 2}$, respectively; this creates a free slot in position $i_2$, which the embedding moves $x$ into; finally, the embedding reclassifies position $i_2$ as an $F$-emulator slot (i.e., placing it in $\mathcal{A}_F$) and reclassifies position $i_{a + 2}$ as a buffer slot (i.e., removing it from $\mathcal{A}_F$). 
An example of this process is shown in Figure \ref{fig:slot_moves}.

From the perspective of the $F$-emulator, we have moved $x$ into slot $s$. From the perspective of the $R$-shell, we have done \emph{nothing}, as the set of occupied slots (i.e., $F$-emulator slots and buffer slots) has remained unchanged. 
Critically, although we have changed which slots comprise $\mathcal{A}_F$, we have \emph{not} changed which slots the $R$-shell views as occupied. Importantly, the $R$-shell, being a list-labeling algorithm, does not care about what is actually stored in a given slot: the $R$-shell's behavior is \emph{completely determined} by (1) which slots it thinks are occupied and (2) what ranks it is asked to perform insertions/deletions at. 

In general, if we wish to move an $F$-emulator item $x$ to an $F$-emulator free slot $s$ that is $i$ positions to $x$'s left/right in the $F$-emulator, then we can achieve this by repeatedly applying the construction above. The total cost is $O(1 + a_1)$, where $a_1$ is the number of (non-dummy) items in buffer slots between $x$ and $s$. Thus, a rearrangement that the $F$-emulator thought should cost $O(1)$ actually costs $O(1 + a_1)$ due to \defn{cost amplification}. We refer to the $O(a_1)$ extra moves that needed to be performed as \defn{deadweight moves}---bounding the cost of these moves will be a critical piece of our analysis. 

We remark that, except when specified otherwise, whenever we refer to the cost incurred during a rebuild, we will \emph{include} the cost of deadweight moves. On the other hand, when referring to costs incurred by the simulated copy of $F$, we do not include amplification costs, since the simulated copy of $F$ does not incur deadweight moves.

\begin{figure}
\centering
\includegraphics[width=0.45\linewidth]{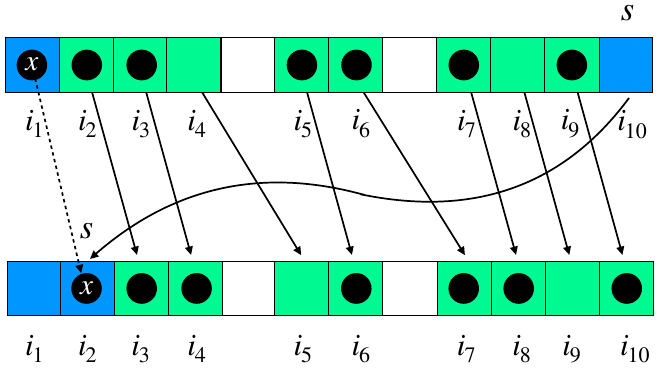}
\caption{An example move in the $F$-emulator of element $x$ to a ($F$-emulator) neighboring free slot $s$. Here $a=8$ buffer slot elements sit in between $x$ and the $F$-free slot in $\Layer{F}{R}$, with $a_1=6$ containing real buffered elements. The remaining $a_2=2$ contain dummy elements. Solid lines represent moves of slots in $\Layer{F}{R}$, while the dashed line represents the move of the element $x$ in the $F$-emulator. From the view of the $F$-emulator, all that has happened is that $x$ moved into slot $s$; and from the view of the $R$-emulator, nothing has happened.}
\label{fig:slot_moves}
\end{figure}

\paragraph{Implementation of the $F$-emulator.}
We are now prepared to describe the implementation of the $F$-emulator. 
As its internal state, the $F$-emulator keeps track of a simulated copy of $F$ on an array of size $(1 + \varepsilon) n$. The actual state of the $\arr_F$ may not match the state of the $F$-emulator at any given moment. To distinguish between them, we will use $F(t)$ to denote the state of the simulated copy of $F$ after the $t$-th operation, and $\widetilde{F}(t)$ to denote the state of $\arr_F$ after the $t$-th operation. 

At a high level, the goal of the $F$-emulator will be to gradually perform work on $\widetilde{F}(t-1)$ with the goal of bringing its state closer to that of $F(t)$.
Because $F(t)$ changes after each time step, it is convenient to freeze the version of $F$ that the $F$-emulator targets, which we call the \defn{checkpoint}. Checkpoints are useful to ensure that progress is made at each time step, because the target state is unchanged until the transformation into the checkpoint is complete.

When a \defn{rebuild} begins, at some time $t_0$, then the state $F(t_0)$ will be the checkpoint used for that rebuild. For all times $t \ge t_0$ until the rebuild finishes, we define the \defn{target checkpoint state} $C(t) = F(t_0)$. In the same way that the simulated copy of $F$ is stored internally by the $F$-emulator, the target state $C(t)$ is also stored internally.

When a rebuild finishes at some time $t_1$, the state $F(t_1)$ may be quite different from $F(t_0)$.  At this point, the next checkpoint time is set to be $t_1$, and $F(t_1)$ becomes the target of the next rebuild.

\begin{figure}
\centering
\includegraphics[width=0.7\linewidth]{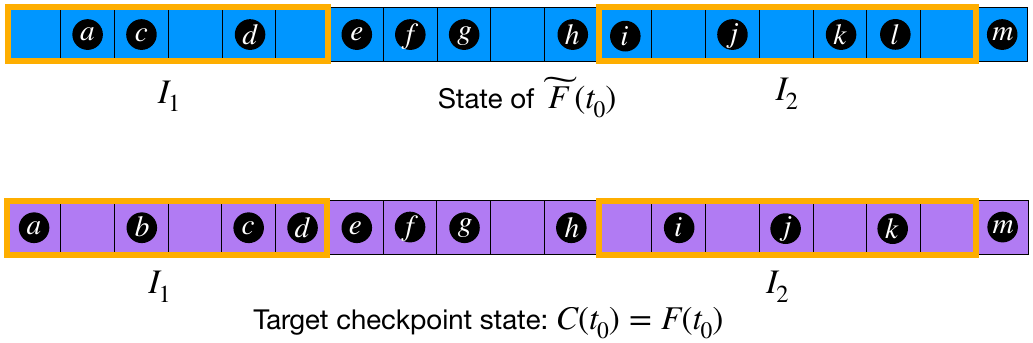}
\caption{A example of the intervals $I_1, I_2, \ldots, I_k$ (where $k = 2$) used by a rebuild beginning at time $t_0$. The states of $\widetilde{F}(t_0)$ (i.e., the slots in $\arr_F$) and $F(t_0)$ (i.e., the simulated copy of $F$) are each shown, and the intervals $I_1$ and $I_2$ are the constructed based on which elements need to move in order to get from one state to the other. The elements that need to move ($a, b, c, d, i, j, k, \ell$) form the set $Q$, and $I_1$ and $I_2$ are defined to be the maximal sub-intervals out of those that contain just elements of $Q$ and that are non-empty.}
\label{fig:rebuild_intervals}
\end{figure}

The rebuild starting at time $t_0$ which transforms $\widetilde{F}(t_0)$ to $C(t_0) = F(t_0)$ is accomplished as follows. Recall that each of $\widetilde{F}(t_0)$ and $F(t_0)$ are arrays of size $(1 + \epsilon)n$ that contain (up to) $n$ elements. Let $Q$ be the set of elements that need to be moved (or inserted or deleted) in order to transform $\widetilde{F}(t_0)$ into $F(t_0)$, that is, the set of elements that appear in at least one of $\widetilde{F}(t_0)$ and $F(t_0)$ but that do not appear in the same array slot in both. Let $I_1,\dots, I_k$ be the set of maximal non-empty contiguous intervals of the $F$-emulator's array $\arr_F$ that contain only elements of $Q$ (see Figure \ref{fig:rebuild_intervals}). 
For each subinterval $I_j$, the rebuild needs to rearrange/insert/delete the elements within the interval to get from their state in $\widetilde{F}(t_0)$ to their state in $F(t_0)$. 

\begin{figure}
\centering
\includegraphics[width=0.55\linewidth]{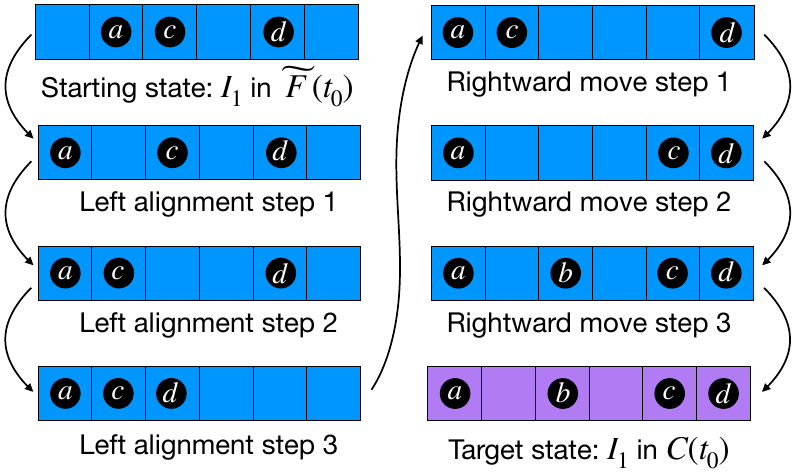}
\caption{An example of rebuilding the interval $I_1$, in Figure \ref{fig:rebuild_intervals}, by first moving the elements in the interval to be left-aligned, and then to their correct positions within $\arr_F$. Each step shows only the state of $I_1$, which in turn is a sub-interval of $\arr_F$, so slots not in $\arr_F$ (i.e., slots colored green and white in Figure \ref{fig:slot_types}) are not shown (this means that deadweight moves are also not shown).  Starting at the state of the interval in $\widetilde{F}(t_0)$, the rebuild first moves the elements in the interval one-by-one to be left-aligned in the interval. The rebuild then moves the elements one-by-one to their target positions within the array $\arr_F$ (i.e., their positions in $F(t_0)$, which is also $C(t)$ for every time $t$ within the rebuild time window). Note that Rightward-move step 3 is an \emph{incorporation step}, moving an element that was formerly in an R-shell buffer slot (so not formerly in $\arr_F$) into a slot within $\arr_F$. Also note that the final rightward-step (which would be Rightward move step 4) is a no-op, since $a$ is already in its correct position within array $\arr_f$ and does not need to be moved.}
\label{fig:interval_moves}
\end{figure}

Now let us describe how a given rebuild rearranges the elements within a given subinterval $I_j$ (Figure \ref{fig:interval_moves}). This process takes place within the array $\arr_F$, so, as discussed earlier, buffered elements (i.e., occupied green slots in Figure \ref{fig:slot_types}) may be carried around via deadweight moves. To rearrange the elements within a given subinterval $I_j$, the rebuild first moves all elements to be completely left-aligned within the subinterval (again, this rearrangement occurs in $\arr_F$); and the rebuild then moves all elements (rightwards) into their correct places, one at a time, starting with the rightmost element. Critically, this latter step also incorporates any elements from the buffer slots that were not present in $\widetilde{F}(t_0)$ but that are present in $C(t) = F(t_0)$ (i.e., these elements move from buffer slots to $\mathcal{A}_F$ slots). 
The two-phase approach of first moving elements to be left-aligned, and then moving them to their correct positions, ensures that the array is always in a correct state with respect to element ranks, and at most doubles any associated costs.

It is worth taking a moment to comment on what it means to incorporate an element from a buffer slot into $\mathcal{A}_F$. As the rebuild occurs on the $F$-emulator, some of the elements in buffer slots (specifically, those that are present in $C(t) = F(t_0)$) are moved from buffer slots into $\mathcal{A}_F$. Whenever an element is incorporated, it is moved from some buffer slot to some $F$-emulator slot. In this case, the buffer slot \emph{remains} a buffer slot, and is now said to contain a buffer dummy element.

\paragraph{Insertions in $\Layer{F}{R}$.}
We are now prepared to describe the full implementation of $\Layer{F}{R}$. Consider an insertion $x$. Let $c_E$ be the cost incurred by the $F$-emulator's simulated copy of $F$ to insert $x$. There are two cases:
\begin{enumerate}[leftmargin=*]
   \item The \defn{fast path} occurs if there is no pending rebuild and $c_E \leq E_R$. In this case:
   \begin{enumerate}
       \item Insert $x$ into $\arr_F$ by emulating $F$ as in the $F$-emulator's simulated copy.
   \end{enumerate}
   \item The \defn{slow path} occurs otherwise. In this case:
   \begin{enumerate}
       \item Insert $x$ into the $R$-shell:
       \begin{enumerate}
           \item Choose an arbitrary buffer slot that currently contains a dummy element, and delete it using $R$.
           \item Insert a new buffer slot at $x$'s rank using $R$.
           \item Put $x$ into the new buffer slot.
       \end{enumerate}
       \item Perform rebuild work in the $F$-emulator:
       \begin{enumerate}
           \item Perform $\Theta(E_R)$ rebuild work in the $F$-emulator.
           \item If, additionally, the rebuild can be completed at cost less than $E_R$, do so.
           \item If the rebuild is complete, set the new checkpoint to be the state of the $F$-emulator's simulated copy of $F$.
           \item If the [new] rebuild can be completed at cost less than $E_R$, do so.
       \end{enumerate}
   \end{enumerate}
\end{enumerate}

We emphasize that in part (b) of the slow path, during steps (i), (ii), and (iv), every time that we refer to the cost of rebuild work, we are \emph{including} the cost of deadweight moves. 

It is also worth remarking on the role of steps (ii) and (iv) in part (b) of the slow path. This is to ensure that, whenever there is a pending rebuild, there is $\Omega(E_R)$ pending work to be done. This is important for step (i) of part (b) of the slow path, which instructs the $F$-emulator to perform $\Theta(E_R)$ rebuild work.

Finally, it should be noted that, at the beginning of time, $R$ must be initialized to contain $\mathcal{A}_F$, along with the $\epsilon n$ buffer slots. This requires the simulation of $\Theta(n)$ insertions on $R$. This is also why we require $R$ to offer a \emph{lightly} amortized guarantee (rather than simply an amortized one), since the lightness of the amortization allows for us to bound the effect of these initialization insertions on the cost of later operations in $R$.

\paragraph{Deletions in $\Layer{F}{R}$.} To delete an item $x$, our first step is to simply remove the item. If the item being deleted is in a buffer slot, then the slot remains a buffer slot (now containing a dummy element). Similarly, if the item being deleted is in an $F$-emulator slot, then that slot also remains an $F$-emulator slot. The $F$-emulator will treat that slot as containing the deleted element, up until the emulator has caught up to a point where the deletion has occurred.

After removing the item, we proceed with almost the same logic as for insertions---the only difference is that now we can skip part (a) of the slow path. In more detail, we let $c_E$ be the cost incurred by the $F$-emulator's simulated copy of $F$ to delete $x$, and there are two cases:
\begin{enumerate}[leftmargin=*]
   \item The \defn{fast path} occurs if there is no pending rebuild and $c_E \leq E_R$. In this case:
   \begin{enumerate}
       \item Delete $x$ from $\arr_F$ by emulating $F$ as in the $F$-emulator's simulated copy.
   \end{enumerate}
   \item The \defn{slow path} occurs otherwise. In this case, we perform part (b) of the slow path for insertion.
   \end{enumerate}

A nice feature of deletions is that they will not require any special handling in our analyses. This will be because (1) the fact that an item has been \emph{removed} will only reduce the costs that we are calculating, and (2) all slow-path operations (both insertions and deletions) always perform at least $\Theta(E_R)$ work on the current rebuild.

\section{Proof of Theorems \ref{thm:twocomp} and \ref{thm:threecomp}}

We now analyze the embedding in order to establish Theorems \ref{thm:twocomp} and \ref{thm:threecomp}. Recall from the technical overview that there are three major issues we must avoid: the deadweight problem, the input-interference problem, and the imbalance problem. 

We remark that the imbalance problem, in particular, is is a matter of \emph{well-defined-ness} for our insertion algorithm -- we must show that, whenever the slow path is invoked by an insertion, there exists at least one available buffer slot (i.e., a buffer slot that contains a dummy element). We will prove this in Lemma \ref{lem:buffer-slots-free}. In order that, in the lemmas preceding Lemma \ref{lem:buffer-slots-free}, our discussion is well-defined, we shall add for now a \defn{halting condition} to our algorithm: if, at any point in time, there are no buffer slots available to handle an insertion, then the algorithm halts (no more operations are performed), and the remaining operations are treated as having cost zero. Of course, we will ultimately see via Lemma \ref{lem:buffer-slots-free} that this halting condition can never occur.

We begin by establishing formally that the design of the embedding avoids the input-interference problem.

 Recall that $F$ and $R$ are each (potentially) randomized algorithms. Let $\operatorname{rand}(F)$ and $\operatorname{rand}(R)$ be the strings of random bits used by each of the two algorithms. Additionally, let $\overline{x}$ denote the input received by $F$ (i.e., the simulation of $F$ maintained by the $F$-emulator), and let $\overline{y}$ be the input received by the $R$ (i.e., the $R$-shell). Both inputs are sequences of insertions and deletions, where each operation specifies a rank at which the insertion or deletion should occur. Whereas $\overline{x}$ is the same input that the full embedding $\Layer{F}{R}$ receives, $\overline{y}$ is determined by a more complicated algorithmic process. We shall now establish that $\overline{y}$ is fully determined by $\overline{x}$ and $\operatorname{rand}(F)$, and is therefore independent of $\operatorname{rand}(R)$. That is, the random bits for $R$ are independent of its input.
\begin{lemma}
The sequence  $\overline{y}$ of operations given to the $R$-shell is independent of the $R$-shell's random bits  $\operatorname{rand}(R)$.\label{lem:nocycles}
\end{lemma}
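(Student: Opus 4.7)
My plan is to prove the lemma by induction on the timestep $t$, establishing a strong invariant: everything the embedding $\Layer{F}{R}$ computes up to time $t$---the simulated copy of $F$, the state $\widetilde{F}(t)$ of $\arr_F$, the rank-classification of every occupied slot as $F$-emulator or buffer, the current checkpoint and rebuild progress, and the prefix $\overline{y}_{\le t}$ of operations sent to $R$---is a deterministic function of $\overline{x}_{\le t}$ and $\operatorname{rand}(F)$ only. The base case handles the $\Theta(n)$ fixed initialization insertions, which depend on neither source of randomness, and the lemma follows by taking $t \to \infty$.

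For the inductive step I would process $\overline{x}_{t+1}$ as follows. Feeding this operation to the simulated copy of $F$ (using $\operatorname{rand}(F)$) updates the simulated $F$ and yields $c_E$, both functions of $(\overline{x}_{\le t+1}, \operatorname{rand}(F))$. Since the pending-rebuild flag is part of the inductive state, so is the fast-vs-slow-path decision. On the fast path nothing is sent to $R$, and the modifications to $\arr_F$ (including any slot reclassifications done during $F$'s emulated moves) are prescribed by the same deterministic-given-state rule. On the slow path two operations are appended to $\overline{y}$: a deletion at the rank of a chosen buffer dummy and an insertion of a buffer slot at the rank of $x$. The latter rank is read directly from $\overline{x}_{t+1}$; the former follows from a fixed selection rule (e.g.~smallest-ranked dummy) applied to the current dummy list, whose ranks are a function of $\overline{y}_{\le t}$ and, inductively, of $(\overline{x}_{\le t}, \operatorname{rand}(F))$.

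The hard part will be confirming that the amount of rebuild progress made in step (b) of the slow path---and hence when each rebuild completes and a new checkpoint is set---is also independent of $\operatorname{rand}(R)$. Each slow-path operation performs $\Theta(E_R)$ units of rebuild work, measured \emph{including} deadweight moves. For an $F$-emulator move from $x$ to an adjacent $F$-emulator free slot $s$, this deadweight cost is $O(1 + a_1)$, where $a_1$ counts buffer slots holding real elements whose ranks lie strictly between those of $x$ and $s$. Although $R$ physically places its elements in $\arr$ using $\operatorname{rand}(R)$, every element stored by $R$ has a rank determined solely by $\overline{y}$, and whether each occupied rank is classified as buffer or $F$-emulator is maintained by the embedding itself. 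Thus $a_1$, and more generally every per-move cost incurred during a rebuild, is a function of the inductive state, so rebuild-completion times are functions of $(\overline{x}, \operatorname{rand}(F))$ alone.

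The main obstacle is precisely this last point: at first glance the cost of an $F$-emulator move appears to depend on how $R$ physically interleaves buffer and $F$-emulator slots in $\arr$, which is a consequence of $R$'s random decisions. My plan is to resolve this by observing that both the ranks of the slots involved and their embedding-assigned roles (buffer vs.~$F$-emulator) are rank-level data tracked by the embedding itself, so $R$'s internal randomness never leaks into the counts that drive the $F$-emulator. Once this is in hand the induction closes: at each step, every choice made by the embedding---path selection, rebuild interval, checkpoint updates, dummy eviction---is a function of state that is itself a function of $(\overline{x}, \operatorname{rand}(F))$, and hence so is $\overline{y}$.
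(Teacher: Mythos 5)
Your proof is correct and takes essentially the same approach as the paper: the paper packages your ``inductive state'' (slot ranks plus their buffer/$F$-emulator annotations, with $R$'s empty slots stripped out) as a single object called the \emph{truncated state} $T$, observes that $R$'s moves leave $T$ unchanged and that the $F$-emulator interacts only with $T$, and concludes that $T$'s evolution---and hence $\overline{y}$---is a function of $\overline{x}$ and $\operatorname{rand}(F)$ alone. The crux you flag (that the deadweight cost $a_1$, and hence rebuild progress, is rank-level data independent of how $R$ physically interleaves its empty slots) is exactly what the truncated-state abstraction encodes.
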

\begin{proof}
 Concretely, we will show that $\overline{y}$ is the same, no matter how the $R$-shell behaves. That is, $\overline{y}$ is fully determined by $\overline{x}$ and $\operatorname{rand}(F)$. 

At any given moment, define the \defn{truncated state} $T$ of $\Layer{F}{R}$ to be the state of the array $\Layer{F}{R}$ except with the $R$-empty slots removed (i.e., the slots of $T$ are the green and blue slots in Figure \ref{fig:slot_types}), and with each remaining slot annotated to indicate whether it is green or blue in Figure \ref{fig:slot_types}. When the $R$-shell moves elements around, the state of $T$ does not change (i.e., all that moves by the $R$-shell change is the interleaving of the white slots in Figure \ref{fig:slot_types} with the green and blue slots). Since the $F$-emulator's behavior is oblivious to where the $R$-shell free slots are, we can think of the $F$-emulator as interacting directly with $T$. That is, even if we do not give the $F$-emulator access to the full state of $\Layer{F}{R}$, but instead only access to $T$, we can fully reconstruct how the $F$-emulator behaves over time. Moreover, since the state of $T$ is unaffected by $R$'s behavior, the evolution of $T$ over time (i.e., what it looks like after each operation in the original input sequence $\overline{x}$) is fully determined by the original input sequence $\overline{x}$ and the random bits $\operatorname{rand}(F)$. Thus the state of $T$ and the behavior of the $F$-emulator (including the choice of which elements are inserted/deleted via fast vs slow paths) is fully determined by the original input sequence $\overline{x}$ and the random bits $\operatorname{rand}(F)$. However, the state of $T$ along with the indicator random variables for which elements in $\overline{x}$ are inserted/deleted via fast vs slow paths fully determine the input sequence $\overline{y}$ that is given to the $R$-shell. Thus the input to the $R$-shell is independent of $\operatorname{rand}(R)$, as desired.

\end{proof}

Next, we address the deadweight problem. We show that, although elements can be moved around (as deadweight), each element is only involved in $O(1)$ total such moves.
\begin{lemma}\label{lem:aux-moves}
    Each item is moved by at most $4$ deadweight moves. Moreover, these deadweight moves occur during either the rebuild in which the item was inserted, or during the next rebuild, and each rebuild moves the item as a deadweight move at most twice.
\end{lemma}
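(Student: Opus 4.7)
The plan separates the argument into two parts: (i) $x$ can be moved as deadweight in at most two distinct rebuilds---the rebuild during which it was inserted and (at worst) the immediately following one; and (ii) within a single rebuild, $x$ undergoes at most two deadweight moves.

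For part (i), I trace $x$'s lifecycle. When $x$ enters a buffer slot at time $t$ via the slow path, either (a) a rebuild $R_p$ with checkpoint $F(t_0)$, $t_0 < t$, is already pending, in which case $x$ is not in $F(t_0)$ and so $R_p$ cannot incorporate $x$; $x$ remains buffered until $R_p$ finishes, at which point the new checkpoint includes $x$ and the next rebuild $R_{p+1}$ incorporates $x$ into $\mathcal{A}_F$; or (b) no rebuild is pending, in which case the slow path starts a fresh rebuild whose checkpoint already contains $x$, and that single rebuild incorporates $x$ itself. After incorporation $x$ is no longer in a buffer slot and cannot participate in further deadweight moves, so all of $x$'s deadweight moves lie in the rebuild during which $x$ was inserted and at most one additional rebuild.

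For part (ii), fix a rebuild and let $I_j$ be the subinterval of $\mathcal{A}_F$ containing the buffer slot holding $x$ (if no such interval exists, the rebuild leaves $x$ untouched). The rebuild processes $I_j$ in two phases, and the plan is to show each phase moves every real element of $I_j$ at most once. Phase 1 left-aligns the real elements of $I_j$: if they currently occupy positions $q_1 < \cdots < q_k$ in $I_j$, their Phase 1 targets are $1, 2, \ldots, k$, and since $q_i \ge i$ for every $i$ each real element moves weakly leftward. Processing $i = 1, 2, \ldots, k$ in order, the $i$-th real element is placed into position $i$ by a single physical swap with whatever dummy sits at position $i$; one checks by induction that such a dummy is always present, since only positions in $\{q_1, \ldots, q_{i-1}\}$ have been touched by earlier swaps and none of those equals $i$. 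The dummies (F-dummies and buffer-dummies) are free to permute at no cost. Phase 2 is symmetric: after Phase 1 the real elements occupy positions $1, \ldots, k$, their target positions $p_1 < \cdots < p_k$ specified by $F(t_0)$ satisfy $p_i \ge i$, and processing right-to-left each real element is placed via a single swap with a dummy; if $x$ happens to be incorporated during this rebuild, its Phase 2 move is precisely the migration from its buffer slot into $\mathcal{A}_F$. Hence each of Phase 1 and Phase 2 moves $x$ at most once, so $x$ suffers at most two deadweight moves per rebuild and, combined with part (i), at most $4$ over its entire lifetime as a buffer element.

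The main obstacle is justifying the single-swap implementation of each phase: every swap of a real element with a dummy implicitly reclassifies two slots between $F$-slot and buffer slot, and one must verify that this reclassification preserves sorted order in the full array, preserves the $R$-shell's set of occupied slots (so that no spurious insertion or deletion is triggered inside $R$), and leaves the F-emulator's internal simulation consistent with $\mathcal{A}_F$. Given the monotone structure of the Phase 1 and Phase 2 permutations these invariants check out routinely, and the bound of $4$ follows directly from combining parts (i) and (ii).
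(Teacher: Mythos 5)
Your part~(i) is a finer-grained version of the paper's argument and is correct: the paper simply notes that the checkpoint $F(t_1)$ of the rebuild \emph{after} the one in progress when $x$ arrives already contains $x$, so $x$ is incorporated by the end of that rebuild; your case split on whether a rebuild was already pending is a slightly more explicit way to reach the same conclusion. Your part~(ii) reaches the paper's conclusion (one deadweight move per pass, two per rebuild), but the mechanism you describe for realizing a Phase~1 or Phase~2 move --- ``a single physical swap with whatever dummy sits at position $i$'' --- is not what the paper does, and as stated it breaks the sorted-order invariant.

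Here is the concrete problem. Consider Phase~1 moving $e_i$ from $\arr_F$ position $q_i$ to $\arr_F$ position $i$. In the \emph{full} array $\arr$, the positions strictly between $\arr_F$~slot $i$ and $\arr_F$~slot $q_i$ may contain real buffered elements, and every such element has rank strictly less than $e_i$ (they sit to $e_i$'s left in $\arr$). Teleporting $e_i$ into $\arr_F$ slot $i$'s physical location and sending the dummy back to $\arr_F$ slot $q_i$'s location leaves those buffered elements to the \emph{right} of $e_i$ while having smaller rank --- a sorted-order violation that no relabeling of slot types (``reclassification'') can repair, since reclassification changes labels, not physical positions. Your closing paragraph acknowledges exactly this invariant and asserts it ``checks out routinely,'' but it does not: for the swap as you describe it, it fails whenever a real buffered element sits between the source and target slots, which is precisely the situation the lemma is about. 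The paper instead uses the shifting mechanism described in the ``How moves in the $F$-emulator are implemented'' paragraph: when $e_i$ travels from $q_i$ to $i$, the interposed buffered elements slide along \emph{with} it (never being jumped over), each incurring exactly one deadweight move, so the total cost is $O(1 + a_1)$ and order is preserved at every step. Under that mechanism your bound of one deadweight move per element per pass is correct: after $e_i$'s Phase~1 move, the buffered elements it carried are squeezed between $\arr_F$ slots $i-1$ and $i$, strictly to the left of the range touched by $e_{i+1}$'s subsequent move. So the lemma's conclusion stands, but the proof as written has a genuine gap in the justification of the move primitive, and the single-swap implementation you propose should be replaced by the paper's slide-and-reclassify primitive.
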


\begin{proof}
    Consider an insertion $x$ at some time $t$. Suppose that at time $t$ we are performing a rebuild on the $F$-emulator that brings the $F$-emulator to checkpoint state $C(t) = F(t_0)$ for some $t_0 \le t$. The next rebuild will bring the $F$-emulator to state $F(t_1)$ for some $t_1 \ge t$. Critically, the checkpoint $F(t_1)$ is after $x$'s insertion time. Thus, by the time the next rebuild is complete, the inserted item $x$ will be incorporated into the $F$-emulator (or $x$ will have been deleted). So the only opportunities for $x$ to be involved in a deadweight move are during the current rebuild or the next one.
    
    To complete the proof, it suffices to show that, during a given rebuild, each item $x$ (that is in a buffer slot) is involved in at most $2$ deadweight moves. Recall that, during a rebuild, the $F$-emulator decomposes $\mathcal{A}_F$ into disjoint intervals $I_1, \ldots, I_k$ (for some $k$) and performs two passes on each $I_j$ (one to move all elements to be completely left aligned, and one to move them into their correct places). The item $x$ incurs deadweight moves from at most one $I_j$, and therefore incurs at most $2$ such deadweight moves  during a given rebuild.
\end{proof}

The key to avoiding the imbalance problem is to prove that each rebuild spans a relatively short period of time.
\begin{lemma}\label{lem:checkpoint-length}
Supposing $n$ is at least a sufficiently large positive constant, there exists a positive constant $c$ such that each rebuild completes in at most $c n / \log n = o(n)$ operations. 
\end{lemma}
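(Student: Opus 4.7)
The plan is to bound the total work needed by any single rebuild by $O(n)$, and then to observe that each operation during the rebuild contributes $\Theta(E_R)$ units of rebuild work, so the rebuild terminates within $O(n/E_R)$ operations; since $E_R = \Omega(\log n)$ (the standard list-labeling lower bound applies to $R$), this gives the claimed $O(n/\log n) = o(n)$ bound.

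First I would observe that during a rebuild, every operation takes the slow path, because the fast-path precondition explicitly excludes any pending rebuild. Step (i) of part (b) of the slow path then performs $\Theta(E_R)$ rebuild work per operation. Hence if $W$ denotes the total work performed by a single rebuild (inclusive of deadweight), the rebuild completes within $O(W/E_R)$ operations.

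Next I would bound $W = O(n)$ by splitting the rebuild work into its two sources. The intrinsic $F$-emulator work (excluding deadweight) is $O(n)$, because the rebuild decomposes a portion of $\arr_F$ into disjoint intervals $I_1, \ldots, I_k$ whose lengths sum to at most $|\arr_F| = (1+\varepsilon)n$, and within each interval the two-phase procedure (left-alignment followed by rightward shifts into target positions) moves each element only $O(1)$ times. The deadweight work is also $O(n)$: only items currently in buffer slots can contribute deadweight moves, and there are at most $\varepsilon n$ buffer slots, so Lemma~\ref{lem:aux-moves}'s bound of at most $2$ deadweight moves per item per rebuild yields total deadweight at most $2\varepsilon n = O(n)$.

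Combining $W = O(n)$ with $E_R = \Omega(\log n)$ gives rebuild length $O(n/\log n) = o(n)$, as claimed. The step that requires the most care is the deadweight accounting, but Lemma~\ref{lem:aux-moves} already delivers exactly the per-rebuild bound we need; the only other subtlety is verifying that steps (ii)--(iv) of the slow path do not inflate the operation count, which they do not since they contribute only $O(E_R)$ additional work per operation on top of the $\Theta(E_R)$ work from step (i).
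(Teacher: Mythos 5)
Your approach differs from the paper's in a way that is conceptually appealing but has a gap. The paper proceeds by induction on rebuilds: it bounds the deadweight cost during the current rebuild by $2(S_1 + S_2)$, where $S_1$ and $S_2$ are the lengths of the previous and current rebuild, and then uses the inductive hypothesis $S_1 \le cn/\log n$ to close the recurrence. You instead try to sidestep the induction by bounding the buffered items structurally: there are only $\varepsilon n$ buffer slots, so, you argue, at most $2\varepsilon n$ deadweight moves. The problem is that Lemma~\ref{lem:aux-moves} gives at most $2$ deadweight moves per \emph{distinct item} per rebuild, so the total deadweight during a rebuild $B$ is bounded by $2$ times the number of distinct items that ever occupy a buffer slot during $B$, not by $2$ times the number of buffer slots. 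Buffer slots are freed during $B$ as items get incorporated into $\arr_F$, and newly inserted items (up to $S_2$ of them) can then occupy those freed slots and incur their own deadweight moves during $B$. So the correct bound is $2(\varepsilon n + S_2)$, and your claim ``$W = O(n)$'' is circular as stated, since $W$ depends on the very quantity $S_2$ that you are trying to bound.

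The gap is fixable and your plan still goes through once you carry the $S_2$ term: combining $\Omega(E_R)\cdot(S_2-1) \le W \le 4n + 2\varepsilon n + 2 S_2$ with $E_R = \Omega(\log n)$ gives $S_2(\Omega(\log n) - 2) \le O(n)$, hence $S_2 = O(n/\log n)$ for $n$ sufficiently large. With this correction your argument is actually simpler than the paper's: your bound of $\varepsilon n$ on the ``old'' buffered items at the start of the rebuild is structural and unconditional, whereas the paper bounds those items by $S_1$ and therefore needs the induction. Your non-deadweight accounting ($O(n)$ via interval lengths and the two-phase procedure) is fine and matches the paper's $4n$ bound.
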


\begin{proof}
We will prove the lemma by induction on the number of rebuilds that have occurred so far.

Let $S_1$ and $S_2$ be the number of operations that occur during the prior rebuild and the current rebuild, respectively.
Since the $F$-emulator performs $\Theta(E_r)$ rebuild work during every operation (possibly except the final one) in the rebuild, and since $E_r = \Omega(\log n)$ by the lower bound of \cite{BulanekKoSa13}, we have that the total cost of the rebuild is at least 
\begin{equation} \clb \cdot (S_2 - 1) \cdot \log n \label{eq:cprime}\end{equation} for some positive constant $\clb$.
Set $c = \max{(4,16/\clb)}$, and assume that $n$ is large enough that both $\log{n} \geq 8/\clb$ and $n \geq 2$.
Our inductive hypothesis will be that $S_1 \leq cn/\log{n}$.
The base case is the first rebuild, in which case $S_1 = 0$.

Now, suppose that a rebuild starts at time $t_0$, and that the inductive hypothesis holds.
We obtain an upper bound on the cost of the rebuild as follows.
\emph{Excluding deadweight moves}, the cost of the rebuild is at most $4n$, since the non-deadweight moves are spent moving around elements that are present in at least one of $\widetilde{F}(t_0)$ and $C(t_0)$, and each such element is moved at most twice during the rebuild. 
On the other hand, by Lemma \ref{lem:aux-moves} the total cost of deadweight moves during the rebuild is at most $2 (S_1 + S_2)$. 
By the inductive hypothesis, we know that $S_1 \le c n / \log n$. So the total cost of the current rebuild, including cost amplification is at most

\begin{equation} 4n + 2 S_1 + 2 S_2 \le 4n + 2c n / \log n + 2S_2. \label{eq:checkpointupper}\end{equation}

Combining \eqref{eq:checkpointupper} and \eqref{eq:cprime}, we have that
$$\clb \cdot (S_2 - 1) \cdot \log n \le 4n + 2c n / \log n + 2S_2.$$

Rearranging terms, we have that
\[ S_2 \leq \frac{1}{\clb\log n -2}\left( 4n + \frac{2c n}{\log n} + \clb \log n \right),\]
and we want to show that the right-hand side is at most $cn/\log n$.
Rearranging terms again, this is equivalent to showing:
\begin{align}\label{eq:checkpoint-wts}
    \clb\log{n} + 4n &\leq \frac{cn}{\log{n}}\left(\clb\log{n} - 2\right) - \frac{2cn}{\log{n}} 
    =cn\left(\clb - \frac{4}{\log{n}}\right).
\end{align}

By assumption, $\log{n} \geq 8/\clb$, so $4/\log{n} \leq \clb/2$.
Therefore,
\[\frac{1}{2}cn\left(\clb - \frac{4}{\log{n}}\right) \geq c\cdot\clb n/4.\]
On the one hand, this is at least $4n$, since $c \geq 16/\clb$ by our choice of $c$.
On the other hand, this is at least $\clb \log{n}$, since $c \geq 4$ (again by choice) and $n \geq \log{n}$ (since $n \geq 2$).
Combining the two halves, we have shown \eqref{eq:checkpoint-wts}, which completes the induction.
\end{proof}

We can now show that the imbalance problem is avoided. That is, whenever the $R$-shell needs a buffer slot to place an element in, there is at least one available, so the halting condition specified at the beginning of the section never occurs.
\begin{lemma}
There are $o(n)$ buffer slots in use at any time. In particular, whenever step (i) of part (a) of the slow path is invoked, there exists at least one buffer slot that contains a dummy element. Thus the halting condition never occurs. \label{lem:buffer-slots-free}
\end{lemma}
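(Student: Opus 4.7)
The plan is to bound the number of occupied buffer slots at any time $t^*$ by the total number of operations performed during the rebuild currently in progress at time $t^*$ together with the rebuild immediately preceding it. Both ingredients are already in hand: Lemma \ref{lem:aux-moves} pins down when a buffered item must leave its buffer slot, and Lemma \ref{lem:checkpoint-length} bounds how many operations fit inside a rebuild.

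In more detail, I would first observe that a buffer slot transitions from dummy to occupied only via step (i) of part (a) of the slow path of an insertion, and transitions back to dummy either when a rebuild incorporates the item into $\mathcal{A}_F$ or when the item is deleted (in which case the slot remains a buffer slot, now holding a dummy element). By Lemma \ref{lem:aux-moves}, if a slow-path insertion $x$ occurs during some rebuild $R_i$, then by the end of the next rebuild $R_{i+1}$ the item $x$ has either been incorporated into $\mathcal{A}_F$ or been deleted. Consequently, every item that sits in a buffer slot at time $t^*$ was inserted during either the current rebuild $R_j$ (the one containing $t^*$) or the rebuild $R_{j-1}$ immediately preceding it. Hence the number of occupied buffer slots at $t^*$ is at most the number of operations in $R_{j-1}$ and $R_j$ combined, which by Lemma \ref{lem:checkpoint-length} is at most $2cn/\log n = o(n)$.

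To finish, I would compare this $o(n)$ upper bound against the total number $\varepsilon n = \Omega(n)$ of buffer slots. For $n$ above a sufficiently large constant, strictly fewer than $\varepsilon n$ buffer slots are occupied, so at least one dummy-containing buffer slot is always available when step (i) of part (a) of the slow path is invoked. In particular, the halting condition introduced at the start of the section is never actually triggered, justifying in hindsight the use of the preceding lemmas (which were stated conditionally on no halting occurring, but whose bounds relied only on algorithmic behavior and not on the absence of halting).

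I do not expect any serious obstacle: the two preceding lemmas do essentially all the work, and the argument amounts to stacking them. The only minor subtlety is the boundary case at the very start of execution, where there is no ``previous rebuild''; but here the claim is even easier, because every buffer slot begins in the dummy state immediately after the initialization of $R$, and no slow-path insertion can have occupied any buffer slot yet.
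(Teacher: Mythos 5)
Your proposal is correct and follows essentially the same approach as the paper: observe that every currently-buffered item must have been inserted during the current or previous rebuild (since earlier checkpoints have already been incorporated), then invoke Lemma~\ref{lem:checkpoint-length} to bound the number of such insertions by $o(n)$. The only small imprecision is that you cite Lemma~\ref{lem:aux-moves} for the fact that buffered items are incorporated by the end of the following rebuild, whereas that fact appears in the \emph{proof} of Lemma~\ref{lem:aux-moves} rather than in its statement (which is about deadweight moves); the paper simply asserts it "by construction."
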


\begin{proof}
Consider any time $t$.
If the $F$-emulator is not in the process of executing a rebuild at time $t$, then there are no items in the $R$-shell, and there is nothing to prove.
Any insertions that occurred before the previous checkpoint have already been incorporated into the $F$-emulator by construction.
Therefore, all of the items in the $R$-shell must have been inserted during the current rebuild or the previous one.
By \Cref{lem:checkpoint-length}, there are $o(n)$ such insertions.
\end{proof}

This brings us to the task of actually bounding the \emph{cost} incurred by the embedding. This is the most subtle part of the analysis.

Recall that a list-labeling algorithm with capacity $n$ is said to guarantee \defn{lightly-amortized expected} cost $C$, if on any operation sequence $\overline{x}_t$, the expected cost incurred by any subsequence of operations $\overline{x}_a, \ldots, \overline{x}_{a + i - 1}$ is at most $i\cdot C + O(n)$. 
We will now prove that, if $R$ has lightly-amortized expected cost $O(E_R$), and $F$ has amortized expected cost $O(G_F)$, then $\Layer{F}{R}$ incurs amortized expected cost $O(G_F)$.

\begin{lemma}
Suppose that $R$ has lightly-amortized expected cost $O(E_R)$ per operation; and that $F$ has amortized expected cost $O(G_F)$ per operation (where $G_F$ may be a function of the input sequence). Then, on any input sequence $\overline{x}$ of length $\Omega(n)$, the embedding $\Layer{F}{R}$ incurs amortized expected $O(G_F)$ cost per operation. 
\label{lem:EF1}
\end{lemma}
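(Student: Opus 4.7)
The plan is to decompose the total cost of $\Layer{F}{R}$ on $\overline{x}$ into three pieces and bound each in expectation. Every move performed by the embedding is either (i) a non-deadweight move of the $F$-emulator (from fast-path emulation or rebuild work), with total $W_F$; (ii) a deadweight move incurred when a rebuild drags real buffered elements along, with total $D$; or (iii) a move performed by the $R$-shell on its own, with total $W_R^\star$. Lemma \ref{lem:aux-moves} immediately yields $D \le 4T$, where $T$ is the length of $\overline{x}$.

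To bound $W_F$, I will partition the timeline into alternating \emph{quiet periods} (no pending rebuild, during which $\widetilde{F}(t) = F(t)$ and a fast-path operation contributes exactly the simulated cost $c_F(t)$ to $W_F$) and \emph{rebuild episodes} (maximal runs of back-to-back rebuilds triggered by slow-path operations). A rebuild episode spanning $[t_s, t_e]$ begins with $\widetilde{F}(t_s - 1) = F(t_s - 1)$ and ends with $\widetilde{F}(t_e) = F(t_e)$; each constituent rebuild transitions $\widetilde{F}$ between two consecutive checkpoints $F(\tau_{i-1})$ and $F(\tau_i)$ at non-deadweight cost at most twice the number of elements whose slots differ between those two states, and every such element was moved by simulated $F$ during $(\tau_{i-1}, \tau_i]$. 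Telescoping within the episode and then summing across episodes yields
\[
W_F \le 2\sum_{t=1}^{T} c_F(t) + O(n),
\]
where the $O(n)$ term covers a possibly-unfinished final rebuild (bounded by the $O(n)$ non-deadweight rebuild cost used in the proof of Lemma \ref{lem:checkpoint-length}). Because the simulated copy of $F$ inside the $F$-emulator sees exactly the input $\overline{x}$, $F$'s amortized expected guarantee gives $\mathbb{E}\bigl[\sum_t c_F(t)\bigr] \le O(T \cdot G_F)$.

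For $W_R^\star$, each slow-path operation induces $O(1)$ operations on $R$, and the $R$-shell is additionally initialized with $\Theta(n)$ insertions. Letting $s$ denote the number of slow-path operations, $R$ processes a sequence of length $\Theta(n) + O(s)$. By Lemma \ref{lem:nocycles}, this sequence is independent of $\operatorname{rand}(R)$, so $R$'s lightly-amortized expected guarantee applies and gives $\mathbb{E}[W_R^\star] \le O((s + n) E_R + n)$. Since every slow-path operation performs $\Theta(E_R)$ non-deadweight rebuild work in the $F$-emulator, $s \cdot E_R \le O(W_F)$, so $\mathbb{E}[W_R^\star] \le O(W_F + n E_R)$. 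Summing the three pieces yields $\mathbb{E}[\text{total cost}] \le O(T \cdot G_F + n E_R + T)$, which is $O(T \cdot G_F)$ once $T = \Omega(n)$ is taken with a sufficiently large hidden constant so that the one-time $O(n E_R)$ initialization charge is dominated by $T \cdot G_F$. Dividing by $T$ gives amortized expected cost $O(G_F)$ per operation.

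The main obstacle is the telescoping argument for $W_F$: a slow-path insertion is not yet present in $\widetilde{F}$ at the time of its operation and is incorporated only later by a future rebuild, so naively assigning its rebuild cost to a single time step would be ambiguous. Lemma \ref{lem:aux-moves}, which says each element persists through at most two rebuilds, is what makes the accounting watertight, ensuring that every inserted element's displacement contributes a bounded amount to at most $O(1)$ consecutive rebuilds and is never double-charged.
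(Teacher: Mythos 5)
Your decomposition into $W_F$ (non-deadweight $F$-emulator cost), $D$ (deadweight cost), and $W_R^\star$ ($R$-shell cost) is the same as the paper's ($C_2 - C_4$, $C_4$, $C_1$), and your telescoping bound $W_F \le 2\sum_t c_F(t) + O(n)$ is actually a more careful justification of the step that the paper compresses into the unproven assertion ``$C_2 = O(C_3) + C_4$.'' However, there is a genuine gap in your treatment of $W_R^\star$, and it sits at exactly the point the paper flags as the delicate one.

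You write that $R$ processes a sequence of length $\Theta(n) + O(s)$ (the $\Theta(n)$ being initialization), apply the lightly-amortized guarantee to the \emph{whole} sequence, and conclude $\mathbb{E}[W_R^\star] \le O((s+n)E_R + n)$. You then try to absorb the residual $O(n E_R)$ term into $T \cdot G_F$ by taking $T = \Omega(n)$ ``with a sufficiently large hidden constant.'' This does not work: for $T = \Theta(n)$ you are comparing $n E_R$ against $n G_F$, and $E_R$ may be arbitrarily larger than $G_F$ (this is exactly the regime the lemma is supposed to handle, e.g.\ $E_R = \log^{3/2} n$ versus $G_F = \log n$ in the hammer-insert corollary, and worse in the $\Layer{Y}{Z}$ sub-embedding where $E_R = \log^2 n$ versus $G_F = \log^{3/2} n$). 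No universal constant in the $\Omega(n)$ hypothesis can make $n G_F$ dominate $n E_R$. The correct move is the one the paper takes: the $\Theta(n)$ initialization ``operations'' on $R$ never actually move real elements --- they just establish the initial state of the $R$-shell --- so they are excluded from $W_R^\star$, and $R$'s \emph{lightly}-amortized guarantee is then applied to the \emph{contiguous subsequence of the $O(s)$ post-initialization operations}, yielding $\mathbb{E}[W_R^\star] \le O(s E_R + n)$ with no $n E_R$ term. This is the entire reason the hypothesis on $R$ is light amortization rather than plain amortization; your proof invokes light amortization but never actually uses the subsequence feature that makes it different from ordinary amortization, which is why the spurious $n E_R$ appears.

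A secondary, fixable slip: ``every slow-path operation performs $\Theta(E_R)$ \emph{non-deadweight} rebuild work'' is not quite right --- the $\Theta(E_R)$ rebuild budget explicitly includes deadweight moves, so the inequality you want is $s \cdot E_R \le O(W_F + D)$, not $s \cdot E_R \le O(W_F)$. Since $D = O(T)$ this still gives $\mathbb{E}[W_R^\star] \le O(W_F + T + n)$ and does not harm the final bound, but as written the step is unjustified.
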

\begin{proof}
Consider an input sequence $\overline{x}_t$ of length $T \ge \Omega(n)$. Let $t_1, t_2, \ldots, t_j$ be the operations that trigger the slow path in the embedding. Each $t_i$ triggers $2$ operations (an insertion and a deletion) in the $R$-shell, leading to a total of $2j$ operations in the $R$-shell. 

By Lemma \ref{lem:nocycles}, the operations given to the $R$-shell are independent of the $R$-shell's random bits, so we can apply $R$'s light-amortization guarantee to bound the expected cost incurred by the $R$-shell on the $2j$ operations by
\begin{equation}
    O(j E_R + n).
    \label{eq:cost2j}
\end{equation}  

It is worth emphasizing why we needed a \emph{light} amortization guarantee from $R$ in order to arrive at \eqref{eq:cost2j}. The issue is that, when the embedding is initialized, the $R$-shell must be initialized to contain $\Theta(n)$ dummy elements, meaning that the $R$-shell has already has $\Theta(n)$ (virtual) operations by the time the input sequence starts. Thus the $2j$ operations that the $R$-shell incurs after initialization must be treated as an \emph{subsequence} of $R$'s operation sequence---this is why we need $R$'s guarantee to be \emph{lightly} amortized.

On the other hand, each of the $j$ operations that trigger the slow path perform $\Theta(E_R)$ rebuild work on the $F$-emulator. This means that the $F$-emulator incurs cost at least $\Omega(jE_R)$. It follows that, if $C_1$ is the total expected cost incurred on the $R$-shell, and $C_2$ is the total expected cost incurred on the $F$-emulator (including deadweight moves), then 
\begin{equation}
C_1 \le O(j E_R + n) \le O(C_2 + n),
\label{eq:C1toC2}
\end{equation}
where the first inequality follows from \eqref{eq:cost2j}. Since the input sequence has length $T \ge 
 \Omega(n)$, the amortized expected cost per operation is at most
\begin{equation}O((C_1 + C_2) / T) \le O(C_2/T + n/T) = O(C_2 / T + 1),\label{eq:c2n1}\end{equation}
where the first inequality follows from \eqref{eq:C1toC2}.

 Finally, define $C_3$ to be the cost incurred by the simulated copy of $F$. We have by construction that $\E[C_3] \le O(T \cdot G_F)$, and we have by Lemma \ref{lem:aux-moves} that the cost $C_4$ of deadweight moves is $O(T)$. Since $C_2 = O(C_3) + C_4$, it follows that $\E[C_2] \le O(T \cdot G_F)$. Therefore, by \eqref{eq:c2n1}, the embedding has amortized expected cost $O(C_2 / T + 1) = O(G_F)$ per operation.
 \end{proof}

Critically, if $R$ and $F$ \emph{both} offer lightly-amortized guarantees, then we can show that $\Layer{F}{R}$ \emph{also} offers lightly-amortized guarantees. This is what will allow for us to structure Theorem \ref{thm:twocomp} in such a way that it can be applied twice to obtain Theorem \ref{thm:threecomp}. 
 \begin{lemma}
Suppose $G_F(\overline{x})$ is the same value for all input sequences $\overline{x}$, and refer to this value as $G_F$. Suppose that $R$ has a lightly-amortized expected cost $O(E_R)$ per operation; and that $F$ has a lightly-amortized expected cost $O(G_F(\overline{x}))$ per operation. Then, on any input sequence $\overline{x}$ of length $\Omega(n)$, the embedding $\Layer{F}{R}$ incurs \textbf{lightly-}amortized expected cost $O(G_F(\overline{x}))$ per operation.
\label{lem:EF2}
\end{lemma}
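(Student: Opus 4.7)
The plan is to follow the argument of Lemma~\ref{lem:EF1}, but to bound the expected cost on an arbitrary contiguous subsequence $\overline{x}_a,\ldots,\overline{x}_{a+i-1}$ of length $i$ by $O(iG_F+n)$, rather than averaging over the full sequence. Let $j'$ denote the number of slow-path operations in the subseq. The $R$-shell receives $2j'$ operations that form a contiguous subsequence of its input, independent of its random bits by Lemma~\ref{lem:nocycles}; so by $R$'s lightly-amortized guarantee, the expected $R$-shell cost is at most $O(j'E_R+n)$. Since each slow-path operation performs $\Theta(E_R)$ rebuild work in the $F$-emulator, this is at most $O(C_F + n)$, where $C_F$ is the $F$-emulator cost on the subseq, so it suffices to show $\E[C_F] = O(iG_F+n)$.

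I decompose $C_F$ into fast-path cost, slow-path non-deadweight rebuild cost, and deadweight moves. On any fast-path operation, no pending rebuild exists, so no real items occupy buffer slots and no cost amplification occurs; thus the fast-path cost equals the simulated $F$ cost on that operation. Summing over the subseq gives a bound of the simulated $F$ cost on the subseq, which by $F$'s lightly-amortized guarantee is $O(iG_F+n)$. For deadweight moves, Lemma~\ref{lem:aux-moves} bounds each item's deadweight moves by $4$ and restricts them to the rebuild in which the item was inserted or the next one; combined with Lemma~\ref{lem:checkpoint-length}, only items inserted during the subseq or during a rebuild overlapping its start (which spans at most $o(n)$ earlier operations) can contribute. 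The deadweight cost is therefore $O(i+n)$.

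The delicate piece is the slow-path non-deadweight rebuild cost. Each rebuild $R_j$, spanning previous and current checkpoints $t_0^{(j-1)}$ and $t_0^{(j)}$, has non-deadweight cost at most $4$ times the simulated $F$ cost on $[t_0^{(j-1)}+1, t_0^{(j)}]$, because this bounds the number of elements differing between the starting state $F(t_0^{(j-1)})$ and target state $F(t_0^{(j)})$, and each such element moves at most twice during $R_j$. Let $R_1,\ldots,R_k$ be the rebuilds overlapping the subseq. For at most two of them (namely $R_1$, and $R_2$ in the case where $R_1$ started before the subseq), the previous checkpoint may lie outside the subseq; I use the universal $4n$ cap on a rebuild's non-deadweight cost to bound their combined contribution by $8n$. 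For the remaining rebuilds $R_3,\ldots,R_k$, the previous checkpoint $t_0^{(j-1)}$ equals the start time of $R_{j-1}$, which lies at or after $a$, so their bounds telescope to $4$ times the simulated $F$ cost on a window of length at most $i$ contained in the subseq, which is $O(iG_F+n)$ by $F$'s light amortization. Summing the three pieces gives $\E[C_F] = O(iG_F+n)$ and hence the desired lightly-amortized bound on $\Layer{F}{R}$.

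The main obstacle is handling rebuilds that straddle the left boundary of the subseq, since $R_j$'s cost is charged to simulated $F$'s work in a checkpoint interval that may extend before the subseq; this is where a naive attempt would produce extra additive terms like $nG_F/\log n$ that violate the lightly-amortized form. The resolution is to identify the (at most two) boundary rebuilds and absorb them into the additive $O(n)$ term via the $4n$-per-rebuild cap, while the remaining rebuilds telescope cleanly within the subseq and fit into $F$'s own light amortization.
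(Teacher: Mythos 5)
Your proof is correct and follows the same high-level plan as the paper's (bound the $F$-emulator cost $C_2$ on the subsequence and combine with \eqref{eq:c2n1}), but you carry out the cost accounting more carefully, and in doing so you close a gap that the paper's proof leaves implicit. The paper writes that $\E[C_2 - C_3]$ ``(i.e., the cost of deadweight moves)'' is $O(T+S)$, which tacitly equates the non-deadweight $F$-emulator work on $\hat{x}$ with the simulated-$F$ cost $C_3$ on $\hat{x}$. That identification is not literally correct: the rebuild work done during $\hat{x}$ replays simulated-$F$ moves from the checkpoint intervals of the active rebuilds, and these intervals can extend before $\hat{x}$ begins, so a naive charge to $F$'s light amortization on an interval of length $T + o(n)$ really would introduce the spurious $O(nG_F/\log n)$ term you call out. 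Your three-way decomposition (fast-path cost $\le C_3$, deadweight cost $O(T+n)$ by Lemmas~\ref{lem:aux-moves} and~\ref{lem:checkpoint-length}, and non-deadweight rebuild cost) makes the accounting explicit, and your handling of the non-deadweight piece---capping the at-most-two boundary rebuilds by the universal $4n$ non-deadweight bound per rebuild, and telescoping the interior rebuilds' checkpoint intervals so that they lie entirely within the subsequence and can be charged to $F$'s own light-amortization guarantee---is exactly the argument needed to make the paper's conclusion rigorous. One small point to be explicit about: after a rebuild completes there may be fast-path operations before the next rebuild starts, so the starting state of $R_j$ is $F(\tau)$ for some $\tau$ with $t_0^{(j-1)}\le\tau\le t_0^{(j)}$, but this only tightens your upper bound and the telescoping still goes through.
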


It is worth remarking on why, in Lemma \ref{lem:EF2}, $G_F(\overline{x})$ takes the same value $G_F$ for all $\overline{x}$. This is so that it makes sense to talk about $F$ offering a \emph{lightly}-amortized expected cost of $G_F$. This means that, on any input subsequence $\overline{x}'$ of some length $T$, the total expected cost on that subsequence should be at most $O(T G_F + n)$. If we want $G_F$ to depend on $\overline{x}$, then we would need to make it also depend on the \emph{subsequence} $\overline{x}'$. One could extend Lemma \ref{lem:EF2} to this setting, but as we shall see later on, it is not actually necessary for our results.

\begin{proof}
    The proof is similar to that of Lemma \ref{lem:EF1}, except that now we focus on an input \emph{sub}sequence $\hat{x} = \overline{x}_t, \ldots, \overline{x}_{t + T}$ for some $t$ and $T$. 

    By the same reasoning as for \eqref{eq:c2n1} in Lemma \ref{lem:EF1}, we have that the total expected cost incurred by the embedding on $\hat{x}$ is
    \begin{equation}O(C_2 + n),
    \label{eq:c2n2}
    \end{equation}
    where $C_2$ is the total expected cost incurred by the $F$-emulator on $\hat{x}$. 
    
    Define $C_3$ to be the cost incurred by the simulated copy of $F$ on $\hat{x}$. We have by light amortization that $\E[C_3] \le O(T \cdot G_F + n)$. We also have by Lemma \ref{lem:aux-moves} that $\E[C_2 - C_3]$ (i.e., the cost of deadweight moves) is $O(T + S)$, where $S$ is the maximum size of any rebuild batch. This implies by Lemma \ref{lem:checkpoint-length} that $\E[C_2 - C_3] \le O(T + n)$, implying that $\E[C_2] \le O(T + n) + \E[C_3] \le O(T \cdot G_F + n)$. By \eqref{eq:c2n2}, it follows that the total expected cost incurred by the embedding on $\hat{x}$ is $O(T \cdot G_F + n)$.
\end{proof}

Finally, we show that $\Layer{F}{R}$ also enjoys the cost guarantees offered by $R$.
\begin{lemma}
Suppose that $R$ has lightly-amortized expected cost $O(E_R)$ per operation and that $R$ has a worst-case cost $O(W_R)$ per operation, where $W_R \ge E_R$. Then the embedding $\Layer{F}{R}$ also has lightly-amortized expected cost $O(E_R)$ per operation and worst-case cost $O(W_R)$ per operation.\label{lem:Rcost}
\end{lemma}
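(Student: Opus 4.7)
The plan is to bound, for each operation, the cost contributed separately by (i) the $R$-shell and (ii) the $F$-emulator (inclusive of deadweight moves). Both the worst-case and the lightly-amortized expected bounds will then follow from a per-operation $F$-emulator bound combined with lifting $R$'s own guarantees through Lemma~\ref{lem:nocycles}.

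For the worst-case bound, I would split on fast path versus slow path. On the fast path, no rebuild is pending, which I would use to argue that every buffer slot contains only a dummy element: the rebuild-completion logic in steps (b)(iii)--(b)(iv) of the slow path always installs the current $F(t)$ as a new checkpoint and, if cheap enough, finishes the resulting rebuild, so whenever the emulator has no pending rebuild we have $\widetilde{F}(t)=F(t)$ and every previously-buffered real element has already been incorporated into $\arr_F$. Hence the deadweight count $a_1$ is $0$ on every move performed while emulating $F$, and the physical cost in $\arr_F$ is $O(c_E)=O(E_R)=O(W_R)$. On the slow path, part (a) triggers at most two operations on $R$ at worst-case cost $O(W_R)$ each, while part (b) performs a total of $\Theta(E_R)+E_R+E_R=O(E_R)$ rebuild work on the $F$-emulator (the $\Theta(E_R)$ budget in (b)(i), plus the $<E_R$ caps in (b)(ii) and (b)(iv)). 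Summing gives worst-case cost $O(W_R)$ per operation, as required.

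For the lightly-amortized expected bound, fix any contiguous input subsequence of length $T$. The per-operation argument above already shows that each operation costs at most $O(E_R)$ in the $F$-emulator (deadweight included), so the total expected $F$-emulator cost on the subsequence is $O(T\cdot E_R)$, with no additive $n$ term. For the $R$-shell, the slow-path operations within the subsequence trigger at most $2T$ operations on $R$, and by construction these form a contiguous sub-subsequence of $R$'s full input $\overline{y}$. By Lemma~\ref{lem:nocycles}, $\overline{y}$ is independent of $R$'s random bits, so $R$'s own lightly-amortized expected guarantee applies and bounds the expected $R$-shell cost by $O(T\cdot E_R+n)$. Combining the two contributions yields total expected cost $O(T\cdot E_R+n)$ on the subsequence, which is exactly the definition of lightly-amortized expected cost $O(E_R)$.

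The step I expect to be the main obstacle is the ``no pending rebuild implies empty buffer'' claim that powers the fast-path cost bound; without it, fast-path moves could in principle incur deadweight amplification and blow past $W_R$. Once that claim is pinned down from the slow-path rebuild-completion logic, the rest of the proof is essentially bookkeeping: combining $R$'s two given guarantees (worst-case and lightly-amortized) with the per-operation $O(E_R)$ budget that the algorithm explicitly enforces on rebuild work.
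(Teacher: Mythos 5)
Your proof takes essentially the same route as the paper's: establish a per-operation $O(E_R)$ bound on $F$-emulator cost (using the observation that no pending rebuild implies no buffered real elements, hence no cost amplification on the fast path, and the explicit $O(E_R)$ budgets in part~(b) on the slow path), then separately bound the $R$-shell cost via Lemma~\ref{lem:nocycles} and $R$'s own worst-case and lightly-amortized guarantees. Your version is, if anything, slightly more careful than the paper's---you account for the rebuild work in step~(b)(ii) as well as (i) and (iv), and you explicitly note that the $R$-shell operations triggered within a contiguous input window form a contiguous subsequence of $\overline{y}$, which is what licenses applying $R$'s light-amortization guarantee.
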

\begin{proof}
    The main claim that we must establish is that, during each operation, the cost that we incur on the $F$-emulator (including deadweight moves) is always at most $O(E_R)$. This is immediate for operations that trigger the slow path, as the only work that they perform on the $F$-emulator is in steps (i) and (iv) of part (b) of the slow path. To analyze operations that take the fast path, recall that the reason an operation goes to the fast path is that (1) its cost $c_E$ in the simulated copy of $F$ is at most $E_R$ and (2) there is no pending rebuild work. The lack of pending rebuild work means that there are no buffered elements, so the operation will not incur any cost amplification. Thus the operation incurs true cost $c_E \le E_R$. 
    
    Since each operation performs at most $O(E_R)$ work on the $F$-emulator, it remains only to consider the cost incurred on the $R$-shell. By Lemma \ref{lem:nocycles}, the cost that we incur on the $R$-shell is bounded by $R$'s guarantees---$O(E_R)$ in expectation and $O(W_R)$ in the worst case. This implies the lemma.
\end{proof}

Putting the pieces together, we are ready to prove Theorem~\ref{thm:twocomp}.

\restatablethmone*
\begin{proof}
We must first establish that $\Layer{F}{R}$ is a valid list-labeling algorithm. 
The embedding $\Layer{F}{R}$ keeps all elements in sorted order by construction.
Thus the only opportunity for incorrectness is if the algorithm is impossible to execute, that is, if in step (i) of part (a) of the slow path, there are no more buffer dummy elements left. 
But, we know by Lemma \ref{lem:buffer-slots-free} that this never happens.

The cost guarantees follow directly from Lemmas \ref{lem:EF1}, \ref{lem:EF2}, and \ref{lem:Rcost}.
\end{proof}

Applying Theorem \ref{thm:twocomp} twice, we can immediately obtain Theorem \ref{thm:threecomp}. Notice that, although light amortization does not appear at all in the specifications of Theorem \ref{thm:twocomp}, the concept ends up being critical to the proof, as it allows for the output of the first embedding $\Layer{Y}{Z}$ to be reused as the input to the second $\Layer{X}{(\Layer{Y}{Z})}$.

\restatablethmtwo*
\begin{proof}
First, consider the embedding $\Layer{Y}{Z}$, which has $G_F = B$ and $W_R = E_R = C$. By Theorem \ref{thm:twocomp}, the embedding has amortized expected cost $O(B)$ and worst-case $O(C)$. Moreover, since $Y$'s guarantee is an expected-cost guarantee (not an amortized guarantee) the $O(B)$ guarantee for $\Layer{Y}{Z}$ is actually for \emph{lightly}-amortized expected cost. This final fact is critical, as it is what allows $\Layer{Y}{Z}$ to be reused as the input to another embedding, where it serves as $R$ with $W_R = C$ and $E_R = B$. 

Now consider the second embedding $\Layer{X}{(\Layer{Y}{Z})}$, which has $G_F = A(\overline{x})$, $E_R = B$, $W_R = C$. Again applying Theorem \ref{thm:twocomp}, we have that $\Layer{X}{(\Layer{Y}{Z})}$ simultaneously achieves amortized expected cost $O(A(\overline{x}))$ on any input $\overline{x}$; amortized expected cost $O(B)$ on any input; and $O(C)$ worst-case cost on any input.
\end{proof}

\begin{corollary}
There exists a list-labeling algorithm that, on any input sequence of length $\Omega(n)$, achieves:
\begin{itemize}
    \item Amortized expected cost $O(\log n)$ if the input sequence is a \emph{hammer-insert workload}, as defined in \cite{BenderHu07};
    \item Amortized expected cost $O(\log^{3/2}n)$;
    \item Worst-case cost $O(\log^2 n)$.
\end{itemize}
\end{corollary}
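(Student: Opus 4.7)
The plan is to directly instantiate Theorem \ref{thm:threecomp} with three off-the-shelf list-labeling algorithms, one witnessing each of the three desired bounds. Concretely, I would take $X$ to be the adaptive algorithm of Bender and Hu \cite{BenderHu07}, which achieves amortized expected cost $A(\overline{x}) = O(\log n)$ whenever $\overline{x}$ is a hammer-insert workload (and some larger, well-defined amortized expected cost $A(\overline{x})$ on arbitrary workloads, which we will only use as a formal placeholder). I would take $Y$ to be the randomized algorithm of \cite{BenderCoFa22}, which achieves expected cost $B = O(\log^{3/2} n)$ on every input. Finally, I would take $Z$ to be any of the known deamortized list-labeling algorithms \cite{Willard82,Willard86,Willard92,BenderCoDe02b,BenderFiGi17}, each of which guarantees worst-case cost $C = O(\log^2 n)$ per operation on any input.

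With these three ingredients in hand, the corollary follows immediately from Theorem \ref{thm:threecomp}: the embedding $\Layer{X}{(\Layer{Y}{Z})}$ simultaneously inherits the amortized expected bound $O(A(\overline{x}))$ from $X$, the amortized expected bound $O(B) = O(\log^{3/2} n)$ from $Y$, and the worst-case bound $O(C) = O(\log^2 n)$ from $Z$, on any input sequence of length $\Omega(n)$. Specializing the first bound to hammer-insert workloads gives the claimed $O(\log n)$ bound, completing the proof.

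The main obstacle, such as it is, is purely a matter of verifying preconditions: one must confirm that the three chosen algorithms really do satisfy the precise guarantees required by Theorem \ref{thm:threecomp}. For $Z$ this is immediate, since the cited deamortization results give a worst-case per-operation bound on arbitrary inputs. For $Y$ the cited bound is an expected-cost guarantee on every input, which is exactly what Theorem \ref{thm:threecomp} asks of the middle algorithm. For $X$ one has to state $A(\overline{x})$ as a well-defined function of the input (e.g., the bound of \cite{BenderHu07} as a function of workload parameters), and then observe that $A(\overline{x}) = O(\log n)$ whenever $\overline{x}$ is a hammer-insert workload; the specific value of $A$ on non-hammer-insert workloads is irrelevant for the corollary since the other two bounds already dominate there. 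No new analysis is required beyond invoking Theorem \ref{thm:threecomp}.
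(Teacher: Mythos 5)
Your proposal is correct and takes essentially the same approach as the paper: instantiate Theorem~\ref{thm:threecomp} with $X$ as the adaptive algorithm of \cite{BenderHu07}, $Y$ as the randomized $O(\log^{3/2} n)$ algorithm of \cite{BenderCoFa22}, and $Z$ as a deamortized worst-case algorithm (the paper uses \cite{Willard92}). The extra remarks you make about verifying preconditions and treating $A(\overline{x})$ as a formal placeholder on non-hammer-insert workloads are sensible but add nothing beyond the paper's one-line instantiation.
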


\begin{proof}
The result follows by applying Theorem~\ref{thm:threecomp} to the following three list-labeling algorithms:
\begin{itemize}
\item As $X$: The algorithm from \cite{BenderHu07}, which achieves amortized $O(\log n)$ cost on hammer-insert workloads.
\item As $Y$: The algorithm from \cite{BenderCoFa22}, which achieves expected cost $O(\log^{3/2} n)$ on all inputs.
\item As $Z$: The algorithm from \cite{Willard92}, which achieves deamortized $O(\log^2 n)$ cost on all inputs. 
\end{itemize}
\end{proof}

\begin{corollary}
Let $\overline{x} = \overline{x}_1, \overline{x}_2, \ldots, \overline{x}_n$ be a sequence of $n$ insertions, and let $\pi:[n]\rightarrow [n]$ be the permutation such that $\overline{x}_{\pi(1)} < \overline{x}_{\pi(2)} < \cdots < \overline{x}_{\pi(n)}$. Let $P:\{x_i\} \rightarrow [n]$ be a \emph{rank predictor}, and let $\eta = \max_i |\pi(i) - P(\overline{x}_i)|$ denote the maximum error that $P$ incurs across the insertions. Then there is an online (learning-augmented) list-labeling algorithm that, when equipped with $P$, supports the insertion sequence $\overline{x}$ with:
\begin{itemize}%lkjlkjlkj
    \item Amortized expected cost $O(\log^2 \eta)$;
    \item Amortized expected cost $O(\log^{3/2}n)$;
    \item Worst-case cost $O(\log^2 n)$.
\end{itemize}
\end{corollary}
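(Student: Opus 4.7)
The plan is to instantiate Theorem~\ref{thm:threecomp} with the three algorithms $X$, $Y$, $Z$ whose guarantees match the three cost bounds in the corollary. Specifically, I would take $Z$ to be Willard's deamortized algorithm from \cite{Willard92}, which gives worst-case $O(\log^2 n)$ per operation on every input and thus supplies the role of $C = O(\log^2 n)$. For $Y$, I would take the $O(\log^{3/2} n)$-expected-cost algorithm of \cite{BenderCoFa22}, supplying $B = O(\log^{3/2} n)$. For $X$, I would take the learning-augmented algorithm of \cite{predictions}, which, when given access to the rank predictor $P$, guarantees amortized expected cost $O(\log^2 \eta)$ on the sequence $\overline{x}$; this plays the role of $A(\overline{x}) = O(\log^2 \eta)$.

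Next I would invoke Theorem~\ref{thm:threecomp} on these three algorithms. Since $|\overline{x}| = n = \Omega(n)$, the length hypothesis of the theorem is satisfied, and the conclusion gives that $\Layer{X}{(\Layer{Y}{Z})}$ simultaneously achieves amortized expected cost $O(A(\overline{x})) = O(\log^2 \eta)$, amortized expected cost $O(B) = O(\log^{3/2} n)$, and worst-case cost $O(C) = O(\log^2 n)$ on the insertion sequence $\overline{x}$. These are precisely the three bounds claimed by the corollary.

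The one subtlety worth flagging is the interface with the prediction oracle $P$. The algorithm $X$ from \cite{predictions} consumes not just the ranks of the inserted elements but also their predicted ranks $P(\overline{x}_i)$, so I must verify that the embedding from Section~3 carries the predicted ranks through to $X$ unchanged. This is immediate from the construction: the $F$-emulator (which is where $X$ lives) receives exactly the original sequence of insertions in the original order, since Lemma~\ref{lem:nocycles} establishes that the $F$-emulator's input is the unmodified sequence $\overline{x}$ independent of $R$'s randomness. Consequently $X$'s guarantee depending on the prediction error $\eta$ is preserved verbatim, and this is the only step that is not a direct syntactic substitution into Theorem~\ref{thm:threecomp}. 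The main obstacle, if any, is just checking this preservation of the input sequence, which is handled by the embedding's design.
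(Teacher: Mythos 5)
Your proof is correct and takes essentially the same approach as the paper: both instantiate Theorem~\ref{thm:threecomp} with $X$ from \cite{predictions}, $Y$ from \cite{BenderCoFa22}, and $Z$ from \cite{Willard92}. Your additional remark on verifying that the prediction oracle's input passes unchanged to the $F$-emulator is a reasonable sanity check but not something the paper spells out.
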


\begin{proof}
The result follows by applying Theorem~\ref{thm:threecomp} to the following three list-labeling algorithms:
\begin{itemize}
\item As $X$: The learning-augmented algorithm from \cite{predictions}, which achieves amortized $O(\log \eta^2)$ cost on $\overline{x}$.
\item As $Y$: The algorithm from \cite{BenderCoFa22}, which achieves expected cost $O(\log^{3/2} n)$ on all inputs.
\item As $Z$: The algorithm from \cite{Willard92}, which achieves deamortized $O(\log^2 n)$ cost on all inputs. 
\end{itemize}
\end{proof}

\begin{acks}
We gratefully acknowledge the PODS reviewers, whose helpful and specific comments substantively improved our paper.

This research was partially sponsored by the United States Air Force Research Laboratory and the United States Air Force Artificial Intelligence Accelerator and was accomplished under Cooperative Agreement Number FA8750-19-2-1000. The views and conclusions contained in this document are those of the authors and should not be interpreted as representing the official policies, either expressed or implied, of the United States Air Force or the U.S. Government. The U.S. Government is authorized to reproduce and distribute reprints for Government purposes notwithstanding any copyright notation herein.

This work was also supported by NSF grants 
CCF-2106999, 
CCF-2118620, 
CNS-1938180, 
CCF-2118832,  
CCF-2106827, 
CNS-1938709, and
CCF-2247577. 

Hanna Koml\'os was partially funded by the Graduate Fellowships for STEM Diversity.

Finally, William Kuszmaul was partially supported by a Hertz Fellowship, an NSF GRFP Fellowship and the Harvard Rabin Postdoctoral Fellowship.

\end{acks}

\bibliographystyle{ACM-Reference-Format}
\bibliography{merged}

% Article V2mod086-V2mod126 use
\received{December 2023}
\received[revised]{February 2024}
\received[accepted]{March 2024}

\end{document}